\def\eqref#1{equation~\ref{#1}}
\def\1{\bm{1}}
\def\vgamma{{\bm{\gamma}}}
\def\vell{{\bm{\ell}}}
\def\vx{{\bm{x}}}
\def\vz{{\bm{z}}}
\DeclareMathAlphabet{\mathsfit}{\encodingdefault}{\sfdefault}{m}{sl}
\SetMathAlphabet{\mathsfit}{bold}{\encodingdefault}{\sfdefault}{bx}{n}
\def\sZ{{\mathbb{Z}}}
\newcommand{\mse}{MSE}
\newcommand{\snr}{SNR}
\newtheorem{prop}{Proposition}
\newcommand{\ostar}{\mathbin{\mathpalette\make@circled\star}}
\newcommand{\make@circled}[2]{%
  \ooalign{$\m@th#1\smallbigcirc{#1}$\cr\hidewidth$\m@th#1#2$\hidewidth\cr}%
}
\newcommand{\smallbigcirc}[1]{%
  \vcenter{\hbox{\scalebox{0.77778}{$\m@th#1\bigcirc$}}}%
}
\newcommand{\transp}{\text{T}}
\newcommand{\bigO}{\mathcal{O}}
\newcommand{\Angstr}{\text{Å}}
\begin{document}

\twocolumn[

\aistatstitle{Joint Denoising of Cryo-EM Projection Images using Polar Transformers}

\aistatsauthor{Joakim Andén \And Justus Sagemüller}

\aistatsaddress{KTH Royal Institute of Technology, \\ Flatiron Institute, Simons Foundation \And KTH Royal Institute of Technology}

]

\begin{abstract}
Many imaging modalities involve reconstruction of unknown objects from collections of noisy projections related by random rotations.
In one of these modalities, cryogenic electron microscopy~(cryo-EM), the extremely low signal-to-noise ratio~(SNR) makes integration of information from multiple images crucial.
Existing approaches to cryo-EM processing, however, either rely on handcrafted priors or apply deep learning only on select portions of the pipeline, such as particle picking, micrograph denoising, or refinement.
A fully end-to-end reconstruction approach requires a neural network architecture that integrates information from multiple images while respecting the rotational symmetry of the measurement process.
In this work, we introduce the \emph{polar transformer}, a new neural network architecture that combines polar representations and transformers along with a \emph{convolutional attention} mechanism that preserves the rotational symmetry of the problem.
We apply it to the particle-level denoising problem, where it is able to learn discriminative features in the images, enabling optimal clustering, alignment, and denoising.
On simulated datasets, this achieves up to a $2\times$ reduction in mean squared error~(MSE) at a signal-to-noise ratio~(SNR) of $0.02$, suggesting new opportunities for data-driven approaches to reconstruction in cryo-EM and related tomographic modalities.
\end{abstract}

\section{INTRODUCTION}
\label{sec:intro}
Many scientific imaging problems involve reconstructing a single unknown object from multiple noisy projections.
This problem arises in tomography and related modalities, such as X-ray tomography, diffraction imaging, electron tomography, synthetic aperture radar, and single-particle cryogenic electron microscopy~(cryo-EM).
In all these settings, the very low signal-to-noise ratio~(SNR) means that achieving an accurate reconstruction requires combining information from multiple viewing directions while accounting for the rotational symmetry of the measurement process.
Among these modalities, cryo-EM provides a particularly challenging and high-impact instance of this problem.

Cryo-EM is a molecular imaging technique that uses transmission electron microscopes to analyze the structure of molecules, typically proteins of interest for biological or medical research~\citep{Dubochet1982CryoEM}.
This involves freezing a sample containing millions of copies of a single molecule in a thin layer of vitreous ice, which is then imaged in an electron microscope.
The 3D structure of the molecule is then reconstructed from the resulting 2D projection images, recorded as \emph{micrographs}.
The reconstruction process involves several steps, such as first extracting images of individual particles from the micrographs, denoising these images, constructing a low-resolution initial reference map, and then refining this into the final 3D density map~\citep{Frank1996CryoEM}.

An important challenge of cryo-EM imaging is the high noise level.
Because electron dose must be kept low to avoid radiation damage, each projection contains only a weak signal and a large amount of shot noise~\citep{Bendory2020SPCEM}.
Denoising is therefore a critical step in the cryo-EM processing pipeline.
For example, it is needed for visual inspection~\citep{Singer2020Methods}, detecting contamination (outliers)~\citep{Bhamre2016Denois}, generating templates for particle picking~\citep{Singer2020Methods}, identifying high-quality projection images~\citep{Scheres2015ClassAvg}, and as preprocessing for certain ab initio reconstruction methods~\citep{Singer2011CommonLines}.
More broadly, the central methodological challenge goes beyond denoising: integrating information from multiple viewing directions while respecting rotational symmetry is central to cryo-EM processing.
A neural network architecture with these properties is therefore valuable for developing data-driven algorithms for cryo-EM reconstruction.

Standard approaches to cryo-EM denoising, such as Wiener filters~\citep{Frank1996SPIDER} and class averaging~\cite{Heel1984MVClassif}, do not learn from data and therefore rely on handcrafted priors of image structure.
As such, they are limited in accuracy, especially at high noise levels.
While deep learning methods have been applied with success, these have been limited to denoising entire micrographs~\citep{Bepler2019TopazDenoise, Buchholz2019CryoCARE, Palovcak2020Denoising} and do not explicitly incorporate rotational symmetry, limiting their applicability for particle-level denoising and other symmetry-aware tasks central to tomographic reconstruction.

This work introduces the \emph{polar transformer}, a new neural network architecture that rotationally aligns and integrates information from multiple images using a \emph{convolutional attention} mechanism that guarantees overall rotational equivariance.
We thus extend the traditional class averaging method into a data-driven, symmetry-preserving framework that learns discriminative features from noisy images to improve clustering and alignment.
While we demonstrate this architecture in the context of particle-level denoising as a proof of concept, its design is general and applicable to other problems in cryo-EM or tomographic reconstruction more broadly.
Our numerical evaluation focuses on simulated data where an exact ground truth is available, allowing for precise quantitative evaluation.
Extending the polar transformer to experimental data is a natural next step.
On simulated datasets, the polar transformer successfully clusters, aligns, and denoises sets of cryo-EM particle images, achieving up to a $2\times$ reduction in relative mean squared error~(MSE) compared to state-of-the-art denoisers at a signal-to-noise ratio~(SNR) of $0.02$.
Reconstructions based on the denoised images similarly show improved accuracy compared to other denoising methods, suggesting a path forward for end-to-end, data-driven approaches to cryo-EM reconstruction.

The rest of this paper is structured as follows.
Section~\ref{sec:related} reviews the existing literature on cryo-EM image denoising and symmetry-aware neural architectures.
Then Section~\ref{sec:polar} describes the construction of the polar transformer and some of its properties.
Finally, Section \ref{sec:results} presents our numerical evaluation.

\section{RELATED WORK}
\label{sec:related}

In general, DNNs have been used in the cryo-EM reconstruction pipeline to provide stronger priors during the 3D refinement stage.
For example, \citet{Kimanius2021Prior, Kimanius2024Prior} replaced the weak Gaussian prior used in the RELION~\citep{Scheres2012Relion} software with a learned prior.
This resulted in an increase in accuracy, but also showed signs of hallucination.
More importantly, these methods operate only at the last stage of cryo-EM reconstruction, where gains are inherently limited by prior stages of the pipeline.

Another major line of work in cryo-EM uses DNNs as general-purpose function approximators, optimized in an unsupervised manner to each cryo-EM dataset.
These are typically used to parametrize the 3D reconstruction~\citep{Zhong2021CryoDRGN, Gupta2021CryoGAN, Schwab2024DynaMight, Nashed2021CryoPoseNet, Levy2022CryoFIRE} or the set of 2D projections~\citep{Bibas2021RotInvFeat, Kwon2023RotInvRepr, Nasiri2022VAE}.
Consequently, the neural networks typically do not impose priors beyond the inductive bias of the particular architecture.
For the 3D reconstruction case, these methods also do not incorporate the rotational symmetry of the cryo-EM problem.
This is in contrast with the context of the present work, where DNNs are trained to form general-purpose estimators by encoding prior information extracted from large collections of cryo-EM data.

Due to the high noise level, the denoising problem in cryo-EM has a long history.
Classical approaches include low-pass filters (referred to as ``binning'')~\citep{Bartesaghi2018Atomic}, stationary Wiener filters~\citep{Frank1996SPIDER, Tang2007Eman2, Sindelar2011Wiener} or generalized Wiener filters calculated in a steerable basis~\cite{Bhamre2016Denois}.
While effective to some extent and theoretically well understood, these methods are linear and only optimal under Gaussian priors~\citep{Kay1993Fundamentals}, leading to overly blurred images at high noise levels.

Another approach that exploits the geometry of the problem is class averaging~\citep{Heel1984MVClassif, Park2014ClassAvg, Zhao2014ClassAvg, Scheres2015ClassAvg}.
Here, images are clustered by viewing direction, aligned, and averaged.
This reduces the noise level with less blurring compared to filter-based methods, but requires hundreds of images per cluster.
More importantly, clustering and alignment are themselves sensitive to noise and thus limited in what SNR they can address.

Deep learning methods for denoising are able to encode more sophisticated priors but do not exploit the geometry of the problem as effectively.
Standard approaches here are DnCNNs~\citep{Zhang2017DnCNN} and U-Nets~\citep{Ronneberger2015Unet, GurollaRamos2021Unet}.
While U-Nets have been applied to cryo-EM denoising, this has been restricted to whole micrographs~\citep{Bepler2019TopazDenoise, Palovcak2020Denoising}.
While this avoids the need for particle picking, it is limited by considering only one micrograph at a time and does not incorporate the rotational symmetry of the imaging model.

Our proposed approach can be seen as a data-driven extension of class averaging.
We replace handcrafted similarity measures for clustering and alignment with a neural architecture that learns discriminative features of the images while preserving the rotational symmetry of the problem.
This bridges shallow, geometry-aware methods and deep neural priors, enabling symmetry-preserving, data-driven integration of cryo-EM images.

\section{POLAR REPRESENTATIONS, CNNS, AND TRANSFORMERS}
\label{sec:polar}

To construct a neural network architecture capable of aggregating information from multiple rotated images, we first need a representation that allows for rotating images in an efficient and stable manner.
This will also ensure rotational equivariance, allowing us to preserve the symmetries of the problem by construction.

\subsection{Polar Representation}

\begin{figure*}[t]
    \begin{center}
        \begin{subfigure}{25mm}
            \centering
            \includegraphics[height=24mm]{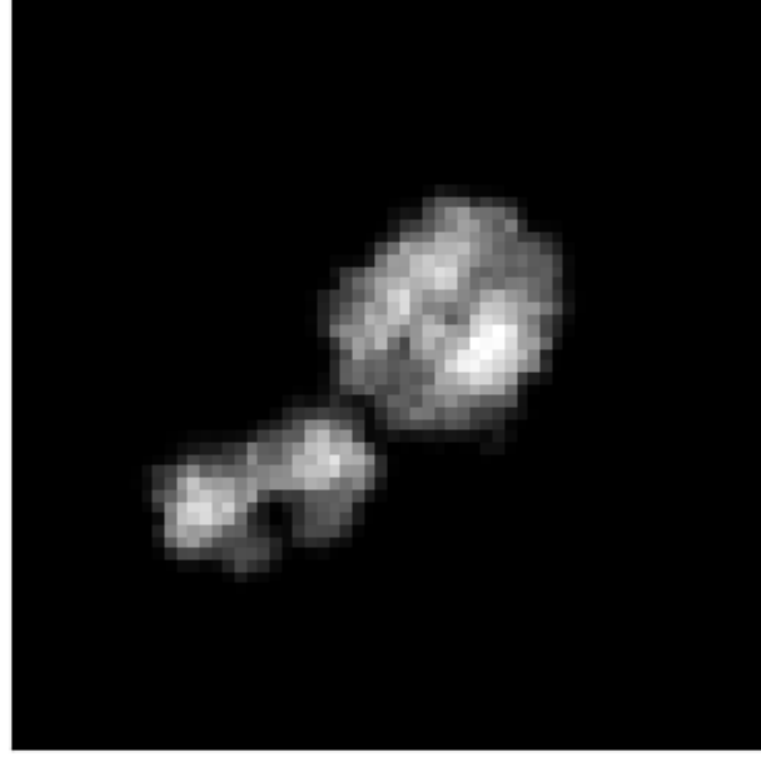}
            \subcaption{\label{fig:polar-decomp-orig}}
        \end{subfigure}%
        \begin{subfigure}{25mm}
            \centering
            \includegraphics[height=24mm]{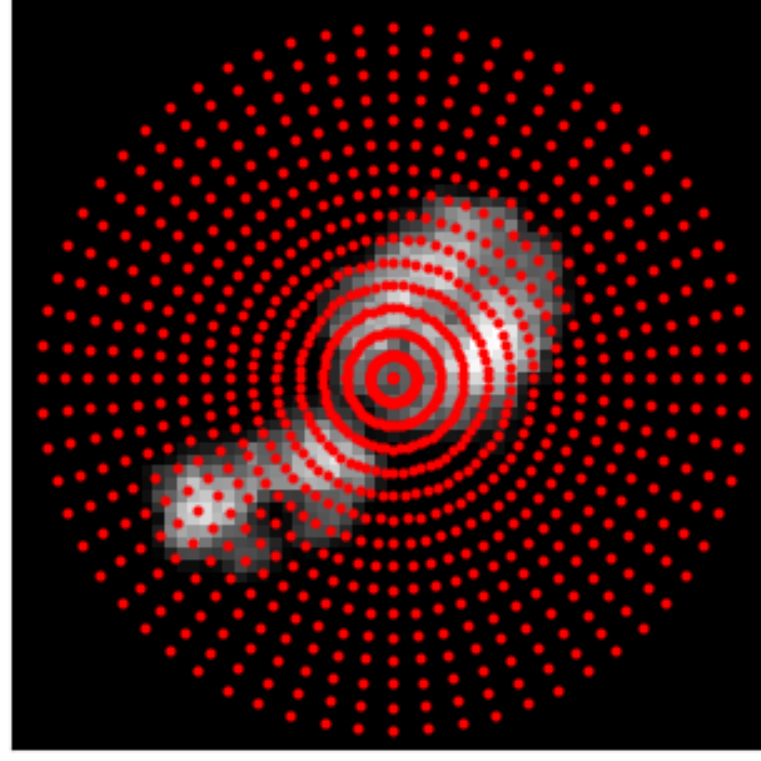}
            \subcaption{\label{fig:polar-decomp-grid}}
        \end{subfigure}%
        \begin{subfigure}{95mm}
            \centering
            \includegraphics[height=24mm]{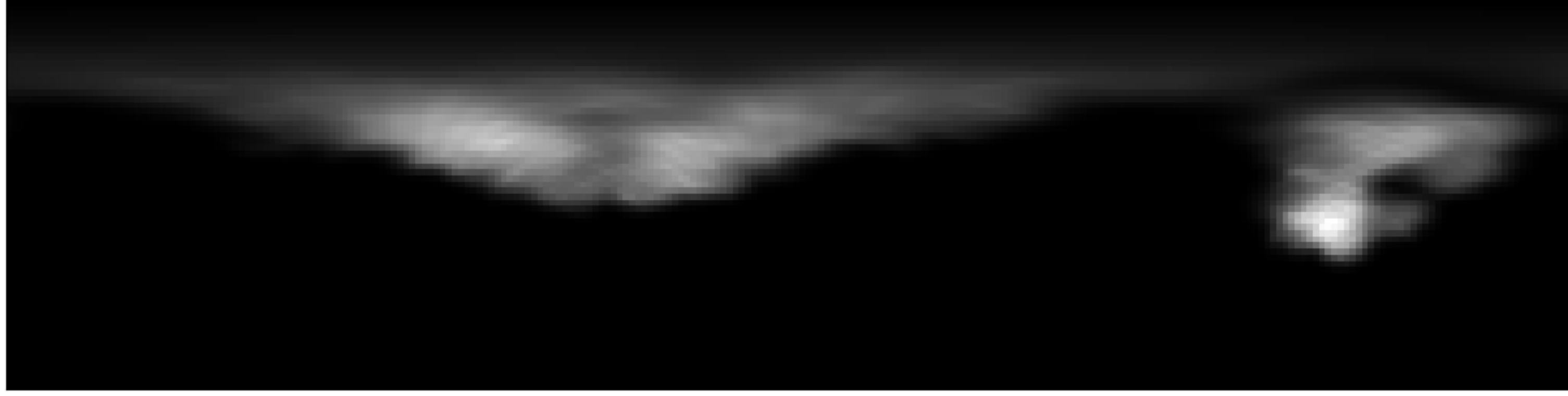}
            \subcaption{\label{fig:polar-decomp-polar}}
        \end{subfigure}%
        \begin{subfigure}{25mm}
            \centering
            \includegraphics[height=24mm]{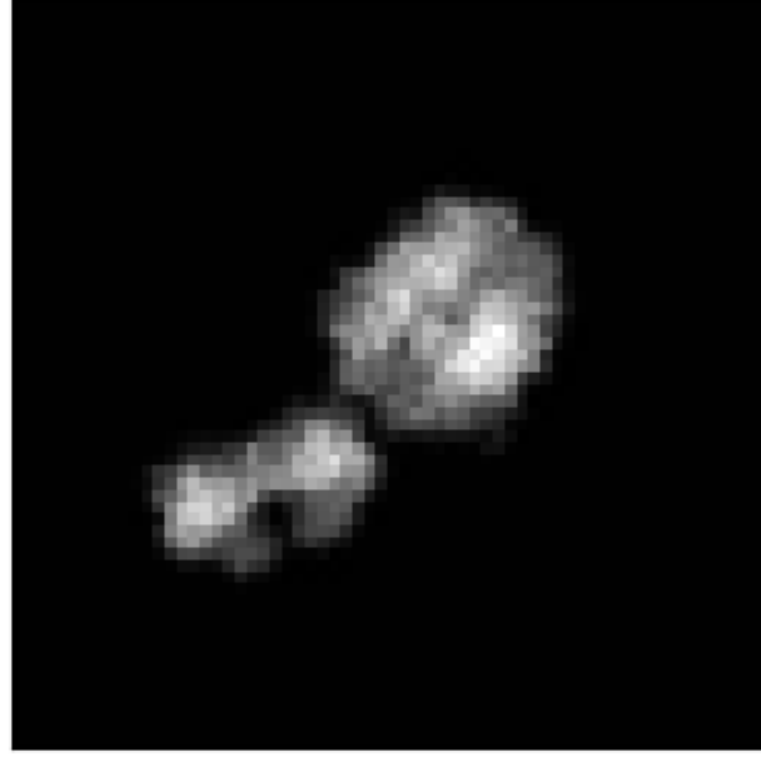}
            \subcaption{\label{fig:polar-decomp-recon}}
        \end{subfigure}
    \end{center}
    \caption{\label{fig:polar-decomp} (a) A simulated $64\times 64$ projection image of PDB ID 2pkq. (b) The projection image with a polar grid superimposed (downsampled by $4\times$ for visualization). (c) The polar representation (horizontal axis: angles; vertical axis: radii). (d) The reconstructed image with relative MSE of $3.5 \cdot 10^{-4}$.}
\end{figure*}

To handle rotations efficiently, we represent the images on a polar grid, where rotations correspond to simple angular shifts.
The original images are given in the Cartesian domain, with pixels on an $L\times L$ grid.
Mapping those to a polar grid by naive interpolation is prone to regularity problems at the pole, which can however largely be avoided by the following techniques.

First, let us assume that the pixel size of the images corresponds to $2/L$, so that the $L\times L$ Cartesian grid spans the square $[-1, 1]^2$~(see Figure~\ref{fig:polar-decomp-orig}).
We now place an $N \times M$ polar grid $(u_{nm}, v_{nm})$ inside this square.
The $N$ radial nodes are given by a Gauss--Jacobi quadrature rule with quadrature weights $w_n$, while the $M$ angular nodes are uniformly spaced on the circle (for more information, see~Appendix~\ref{sec:convoApprox}).
Typically, we choose $N$ and $M$ to be $L$ and $4L$, respectively.

To map our Cartesian images with pixel values $x[i,j]$ to the polar domain, we place a Gaussian radial basis function~(RBF) at each of the grid points and compute inner products.
Let us assume that $L$ is even and that the Cartesian image $x$ is indexed by $-L/2, \ldots, L/2 - 1$ along both axes.
The polar decomposition $Px[n, m]$ is then given by~(see Figure~\ref{fig:polar-decomp-polar})
\begin{equation}
    \label{eq:polar-decomp}
    \frac{\sqrt{w_n}}{Z} \sum_{i,j=-L/2}^{L/2-1} x[i, j] e^{-\frac{(u_{nm} - 2i/L)^2 + (v_{nm} - 2j/L)^2}{2b^2}}
\end{equation}
for $n = 0, 1, \ldots, N-1$ and $m = 0, 1, \ldots, M-1$, where $b$ is a bandwidth factor typically set to $1/L$ and the normalization factor $Z$ is given by $Z = \sqrt{2\pi^3 M L^2 b^4}$.
Each polar coefficient is therefore given by a Gaussian-weighted interpolation of nearby Cartesian pixel values.

This representation enjoys a number of useful properties.
For one, the original image can be provably recovered to arbitrary accuracy by an appropriate choice of parameters $N$, $M$, and $b$ (see Figure~\ref{fig:polar-decomp-recon} and Appendix~\ref{sec:convoApprox}).
The representation is also an approximate isometry when restricted to smooth images (see Appendix~\ref{sec:sumPhi}).
Because of this, the polar representation will not significantly distort the geometry of the original data manifold.

The decomposition also enjoys stability in the sense that the operator norm $\|P\|$ can be shown to be approximately one (see Appendix~\ref{sec:sumPhi}).
In other words, small changes in $x$ result in small changes in $Px$.
This is in contrast with other polar representations, such as those obtained by nearest-neighbor or linear interpolation, which can introduce artifacts for small changes in the Cartesian image~(see numerical experiments in Appendix~\ref{sec:sumPhi}).
This lack of artifacts simplifies the training of neural networks in the representation, improves their equivariance properties, and increases their robustness to noise.
Finally, both $P$ and its inverse can be applied efficiently, with computational complexity $\bigO(L^2)$ and $\bigO(L^2 \log L)$, respectively (see Appendix~\ref{sec:compComplexity}).

\subsection{Polar CNNs}
\label{sec:polcnns}

A crucial property of the polar representation is that it commutes with rotation.
Indeed, we have the relationship $P R_{\gamma_\ell} \approx S_\ell P$, where $S_\ell$ denotes circular shift along the second axis by $\ell$ and we recall that $\gamma_\ell = 2\pi \ell / M$.
Rotation in the Cartesian domain thus corresponds to translation along the angular axis in the polar domain.
Due to the stability of the polar decomposition $P$, off-grid rotations will give sensible polar representations in the form of subgrid shifts.

Equivariance to rotations in the Cartesian domain can therefore be achieved by equivariance to angular translations in the polar domain.
The space of linear operators equivariant to such translations is spanned by \emph{angular convolutions}: operators of the form
\begin{equation}
    z \ostar h[n, m] = \sum_{p=0}^{N-1} \sum_{q=-Q}^{Q} z[p, m - q] h[n, p, q],
\end{equation}
where $z$ is a polar representation of size $N\times M$ and $Q$ is the (angular) width of $h$ (typically in the range $1$--$3$).
In other words, we convolve only along the angular axis, treating the radial axis as a fully connected layer.
We refer to $h$ as an \emph{angular filter}, with two radial indices $n$, $p$ and one angular index $q$.

As defined above, the filter acts globally along the radial axis.
We can enforce localization of the filter along the radial axis by restricting its support.
Specifically, we set $h[n, p, q] = 0$ whenever $|n - p| > W$, for some radial width $W$ (typically in the range $1$--$3$).

The angular filter can be trivially generalized to multiple channels and used as a linear layer in a DNN.
Since the polar representation is essentially a reparametrization of the original Cartesian image, standard layers from CNNs can be applied to the output of the angular convolutions, such as ReLUs, avoiding less natural spectral nonlinearities required for other equivariant networks~\citep{Kondor2018CGNets}.
The above construction also avoids costly transformations between an equivariant basis and a natural image basis~\citep{Cohen2018Spherical}.
Finally, we note that this can also be formulated as an $\text{SO}(2)$-CNN in the formalism of~\citet{Kondor2018Equivariance}.

We shall refer to a DNN consisting of the angular convolutions described above as a \emph{polar CNN}.

\subsection{Polar Transformer}
\label{sec:transformer}

We are now ready to define the polar transformer architecture.
Transformers are a natural architecture for aggregating disparate sources of information, originally introduced in the field of language models~\citep{Vaswani2017Attention, Devlin2018Bert, Brown2020GPT3}.
Here, a sequence of \emph{tokens} is used to generate key, query, and value vectors that are combined using what is known as an \emph{attention mechanism}~\citep{Vaswani2017Attention}.
In language models, tokens are parts of words, but they can be any information carrier, such as an image.

Let us consider a multi-channel polar image $z[c, n, m]$, where $c = 0, 1, \ldots, C-1$ is the channel index, $n = 0, 1, \ldots, N-1$ is the radial index, and $m = 0, 1, \ldots, M-1$ is the angular index.
These are used as input to three polar CNNs, denoted $f^{\text{key}}_\theta$, $f^{\text{query}}_\theta$, and $f^{\text{value}}_\theta$, corresponding to the key, query, and value networks, respectively.
Here, $\theta$ denotes the weights of the mechanism (i.e., the weights of the key, query, and value networks).

While the standard attention mechanism may be useful for combining information in the images by clustering them appropriately, it has two important drawbacks: it cannot align the images rotationally and it is not equivariant to rotations.
In fact, it turns out that these problems are closely related -- an attention mechanism that is rotationally equivariant must also perform alignment.

Now suppose we have a set $\vz = (z_0, z_1, \ldots, z_{K-1})$ of $K$ multi-channel polar images, which we process using the above networks to obtain key, query, and value vectors.
To see how this can be achieved, let us consider an augmented form of the standard attention mechanism, where each key is rotated by an arbitrary angle $\gamma_\ell$, that is, shifted by $\ell$ along the angular axis.
We then have a set of rotated attention coefficients $\alpha_{k,k'}^{(\ell)}(\vz)$ given by
\begin{equation}
    \sigma\left(\sum_{c, n, m} f^{\text{query}}_\theta(z_k)[c, n, m] S_\ell^{} f^{\text{key}}_\theta(z_{k'})[c, n, m] \right),
\end{equation}
for $\ell = 0, 1, \ldots, M-1$, where the softmax $\sigma$ is taken over indices $k'$ and $\ell$ (the softmax also includes a scaling factor of $1/\sqrt{CNM}$).
We then apply a corresponding rotation to the value vectors when performing the linear combination, to obtain
\begin{equation}
    (f_\theta^{\text{ang-attention}}(\vz))_k = \sum_{\ell=0}^{M-1} \sum_{k'=0}^{K-1} \alpha_{k,k'}^{(\ell)}(\vz) S_\ell^{} f^{\text{value}}_\theta(z_{k'})
\end{equation}
for $k = 0, 1, \ldots, K-1$.
While this would seem to incur a significant computational expense, both calculations can be written as convolutions along the angular axis.
This allows efficient implementation using fast Fourier transforms~(FFTs), resulting in a computational complexity of $\bigO(K^2 L^2 \log L)$.

Since it performs attention along the angular axis, we refer to this as the \emph{angular attention mechanism}.
It gives the network the ability to align images rotationally and combine them accordingly.
It also satisfies an extended equivariance property.
Let $S_{\vell}$ denote the joint angular shifting operator for $K$ images by $\vell \in \sZ^K$ such that $(S_{\vell} z)_k = S_{\ell_k} z_k$.
It can then be shown that
\begin{equation}
    f^{\text{ang-attention}}_\theta(S_\vell z) = S_\vell f^{\text{ang-attention}}_\theta(z).
\end{equation}
In the Cartesian domain, this means that we have joint equivariance to rotation of the individual images in the set, i.e., the network is $\text{SO}(2)^K$-equivariant.

By stacking one or multiple angular attention blocks together with polar CNNs for pre- and post-processing, we obtain a \emph{polar transformer}.
Note that here, each token is a whole image, as opposed to models such as the Vision Transformer~\citep{Dosovitskiy2020VIT}, where each token corresponds to an image patch.
There is also no positional encoding, ensuring that the model is invariant to permutations of the input image set.

\section{RESULTS}
\label{sec:results}

To evaluate the architectures proposed above, we conduct numerical experiments on denoising simulated data.
These show how the polar CNN and polar transformer are able to achieve very low denoising errors despite high noise levels, significantly outperforming baseline methods.

\subsection{Problem Setup}
\label{sec:setup}
The denoising problem in cryo-EM aims to recover a clean image $x \in X$ from a noisy version $y \in X$.
The clean images are assumed to be tomographic projections of a 3D potential density from a biological macromolecule.
These are calculated using line integrals of that density along some chosen viewing direction.
Following this, the images are filtered using a contrast transfer function~(CTF) to simulate the optical properties of the microscope, and shifted randomly.
We assume that these are individual particle projections (i.e., we are past the particle-picking stage).
To simulate the noisy projections $y$ from the clean image $x$, we then use additive Gaussian noise, a common assumption in many reconstruction methods.

While simple, the above model has proven to be relatively accurate~\citep{Frank1996CryoEM, Vulovic2013ImageFormation} and is used in a range of cryo-EM reconstruction methods~\citep{Barnett2017Marching, Scheres2012Relion, Punjani2017CryoSPARC, Zhong2021CryoDRGN}.
For comparison with data incorporating more realistic microscopy effects, see \autoref{sec:moreResults}.

We will now consider three distinct denoising tasks.

\paragraph{Single images}
The simplest setting is to consider each projection separately, potentially combining it with information from other projections later on.
We thus have a mapping $f: X \rightarrow X$ such that $x \approx f(y)$.
We call this the \emph{single-image denoising task}.
This is the setting of Wiener filters, which are trained on a large set of images and then applied separately to each noisy image.

One desirable property of $f$ is equivariance to 2D rotations.
This follows from the fact that the probability distribution of the images (both clean and noisy) is invariant to in-plane rotations (one image is as likely to appear in the data as a rotated copy of that image).
Letting $R_\gamma: X \rightarrow X$ denote the in-plane rotation of an image by $\gamma \in [0, 2\pi)$, we thus require
\begin{equation}\label{eq:indivProjEquivar}
  f(R_\gamma x) = R_\gamma(f(x)).
\end{equation}

\paragraph{Directional sets}

More potent processing is possible when using a \emph{set} of \(K\) images, \(\vx\in X^K\), representing a selection of projections picked from a microscopy sample.
In particular, we shall assume that these images all come from the same viewing direction and that they only differ by in-plane rotation (in other words, for each $i, j$, there is $\gamma_{ij}$ such that $x_i = R_{\gamma_{ij}} x_j$).
We call this the \emph{directional set denoising task}.
This situation arises, for example, during class averaging when the clustered images are to be aligned and averaged.

For this task, we extend the equivariance requirement so that each image in the set is rotated by a different angle.
In other words, we have the mapping $R_\vgamma: X^K \rightarrow X^K$ for $\vgamma \in [0, 2\pi]^K$ such that $(R_\gamma \vx)_i = R_{\gamma_i} x_i$.
The equivariance relation in this case is
\begin{equation}
    \label{eq:dirSetEquivar}
    f(R_\vgamma \vx) = R_\vgamma(f(\vx)).
\end{equation}

\paragraph{General sets}

In the general case, we cannot assume the images have already been clustered by viewing direction.
Instead, we rely on the fact that any sufficiently large set already contains such clusters of images that are the same up to rotation.
This is the underlying idea of class averaging and represents the most typical denoising setting.
We will refer to this as the \emph{general set denoising task}.
The challenge here is to identify the similar images in the set (up to rotation) and denoise them jointly.
Again, the denoiser takes the form $f: X^K \rightarrow X^K$ and satisfies the equivariance property (\eqref{eq:dirSetEquivar}), but we do not require that clean images are the same up to rotation.

\subsection{Neural Networks}

Two neural network architectures are used in the experiments.
The first is a simple polar CNN consisting of a Cartesian-to-polar layer, 25 angular convolutional layers with $C = 8$ channels each, each followed by a group normalization layer (with four groups)~\citep{Wu2018GroupNorm} and a ReLU nonlinearity.
Finally, the polar representation was converted back to the Cartesian domain.
The convolutional layers had an angular kernel width of $Q = 5$ and a radial kernel width of $W = 5$.

The other network is a polar transformer, starting with a Cartesian-to-polar layer followed by a polar CNN preprocessing network of depth $5$ applied individually to each image in the set.
This was followed by an angular attention module where the key and query networks consisted of the same polar CNN of depth $17$ with shared weights (the value network was the identity).
Finally, the attention output was postprocessed in a $9$-layer polar CNN before conversion to Cartesian.
The convolutional layers in the CNNs had the same configuration for the polar CNN, but the pre- and post-processing networks had $C = 8$ while the key- and query-network had $C = 16$.
The results are relatively stable to changes in these parameters.
For example, we observe similar performance when stacking multiple attention blocks or allowing for more flexibility in the key, query, and value networks.

For comparison, we also trained a DnCNN and a U-Net.
The DnCNN model follows \citet{Zhang2017DnCNN}, while the U-Net is the one used in the Topaz pipeline~\citep{Bepler2019TopazDenoise}, with added group normalization (with four groups) to avoid explosion of weights at low SNR.
For schematics showing the architectures of the networks used, see Appendix~\ref{sec:architectureSchematics}.

\subsection{Data, Training, and Testing}
\label{sec:experimentData}
Both architectures were trained on simulated data obtained from \(5\,000\) randomly selected molecular structures downloaded from the PDB~\citep{PDB2018}.
Each molecule is projected through \(1\,000\) different viewing directions yielding a total of \(5\,000\,000\) clean projection images of size \(64\times64\), corresponding to a pixel size of \(4~\Angstr\).
The same process was used for another \(100\) molecules, resulting in a testing set of \(100\,000\) clean images.

In the directional set denoising task, each clean image was then rotated randomly, resulting in $K = 8$ different clean images.
For general set denoising, two clean images from one molecule corresponding to two different viewing directions were picked, each rotated to yield $8$ copies, for a total of $2K = 16$ images.
In both cases, the whole image set was used as input to the transformer model.
Note that the same original set of clean images is used for training and testing in all three denoising tasks.
They only differ in how they are presented to the model (as individual images or as sets of images).

To generate the noisy images for both training and testing, we first apply a CTF of random defocus (between $1.5~\text{\textmu m}$ and $2.5~\text{\textmu m}$) and a random shift (of maximum $8$ pixels) before adding Gaussian noise.
In addition to Gaussian noise, we also evaluated the architecture for shot noise models in the form of Poisson noise~(see Appendix~\ref{sec:poissonNoiseDetails}).

All architectures were trained to minimize the MSE with respect to the clean images using the Adam optimizer~\citep{Kingma2017Adam} at a learning rate of $10^{-3}$ and batch size $128$.
Convergence was typically obtained after ten epochs for all models.
Training time per epoch was around $1.3$ hours (polar CNN and U-Net), $1.8$ hours (DnCNN), and $12.5$ hours (polar transformer) on an A100 GPU with 40 GB of memory.

\newcommand{\attentionBNMeanMSE}[1]{\ifthenelse{\equal{#1}{"0.03125_noisvarb0.75"}}{\(0.057\)}{\ifthenelse{\equal{#1}{0.03125_poissonic0.5}}{\(0.044\)}{\ifthenelse{\equal{#1}{0.03125}}{\(0.055\)}{\ifthenelse{\equal{#1}{0.0625}}{\(5417.224\)}{\ifthenelse{\equal{#1}{0.125}}{\(0.020\)}{\ifthenelse{\equal{#1}{"0.25_noisvarb0.75"}}{\(0.013\)}{\ifthenelse{\equal{#1}{0.25}}{\(0.013\)}{}}}}}}}}
\newcommand{\attentionBNStdMSE}[1]{\ifthenelse{\equal{#1}{"0.03125_noisvarb0.75"}}{\(0.033\)}{\ifthenelse{\equal{#1}{0.03125_poissonic0.5}}{\(0.019\)}{\ifthenelse{\equal{#1}{0.03125}}{\(0.025\)}{\ifthenelse{\equal{#1}{0.0625}}{\(1961.978\)}{\ifthenelse{\equal{#1}{0.125}}{\(0.007\)}{\ifthenelse{\equal{#1}{"0.25_noisvarb0.75"}}{\(0.008\)}{\ifthenelse{\equal{#1}{0.25}}{\(0.006\)}{}}}}}}}}
\newcommand{\attentionGNMeanMSE}[1]{\ifthenelse{\equal{#1}{0.001953125}}{\(0.205\)}{\ifthenelse{\equal{#1}{0.00390625}}{\(0.163\)}{\ifthenelse{\equal{#1}{0.0078125}}{\(0.122\)}{\ifthenelse{\equal{#1}{0.015625}}{\(0.084\)}{\ifthenelse{\equal{#1}{"0.03125_noisvarb0.75"}}{\(0.056\)}{\ifthenelse{\equal{#1}{0.03125_poissonic0.5}}{\(0.043\)}{\ifthenelse{\equal{#1}{0.03125}}{\(0.053\)}{\ifthenelse{\equal{#1}{0.0625}}{\(0.032\)}{\ifthenelse{\equal{#1}{0.125}}{\(0.020\)}{\ifthenelse{\equal{#1}{"0.25_noisvarb0.75"}}{\(0.013\)}{\ifthenelse{\equal{#1}{0.25}}{\(0.012\)}{}}}}}}}}}}}}
\newcommand{\attentionGNStdMSE}[1]{\ifthenelse{\equal{#1}{0.001953125}}{\(0.122\)}{\ifthenelse{\equal{#1}{0.00390625}}{\(0.097\)}{\ifthenelse{\equal{#1}{0.0078125}}{\(0.067\)}{\ifthenelse{\equal{#1}{0.015625}}{\(0.043\)}{\ifthenelse{\equal{#1}{"0.03125_noisvarb0.75"}}{\(0.033\)}{\ifthenelse{\equal{#1}{0.03125_poissonic0.5}}{\(0.018\)}{\ifthenelse{\equal{#1}{0.03125}}{\(0.025\)}{\ifthenelse{\equal{#1}{0.0625}}{\(0.014\)}{\ifthenelse{\equal{#1}{0.125}}{\(0.007\)}{\ifthenelse{\equal{#1}{"0.25_noisvarb0.75"}}{\(0.005\)}{\ifthenelse{\equal{#1}{0.25}}{\(0.004\)}{}}}}}}}}}}}}
\newcommand{\attentionClustGNMeanMSE}[1]{\ifthenelse{\equal{#1}{0.0009765625}}{\(0.265\)}{\ifthenelse{\equal{#1}{0.001953125}}{\(0.234\)}{\ifthenelse{\equal{#1}{0.00390625}}{\(0.189\)}{\ifthenelse{\equal{#1}{0.0078125}}{\(0.140\)}{\ifthenelse{\equal{#1}{0.015625}}{\(0.096\)}{\ifthenelse{\equal{#1}{0.03125}}{\(0.061\)}{\ifthenelse{\equal{#1}{0.0625}}{\(0.036\)}{\ifthenelse{\equal{#1}{0.125}}{\(0.021\)}{\ifthenelse{\equal{#1}{0.25}}{\(0.013\)}{}}}}}}}}}}
\newcommand{\attentionClustGNStdMSE}[1]{\ifthenelse{\equal{#1}{0.0009765625}}{\(0.150\)}{\ifthenelse{\equal{#1}{0.001953125}}{\(0.140\)}{\ifthenelse{\equal{#1}{0.00390625}}{\(0.113\)}{\ifthenelse{\equal{#1}{0.0078125}}{\(0.081\)}{\ifthenelse{\equal{#1}{0.015625}}{\(0.051\)}{\ifthenelse{\equal{#1}{0.03125}}{\(0.029\)}{\ifthenelse{\equal{#1}{0.0625}}{\(0.016\)}{\ifthenelse{\equal{#1}{0.125}}{\(0.008\)}{\ifthenelse{\equal{#1}{0.25}}{\(0.004\)}{}}}}}}}}}}
\newcommand{\attentionGNVarbNoisMeanMSE}[1]{\ifthenelse{\equal{#1}{"0.03125_noisvarb0.75"}}{\(0.052\)}{\ifthenelse{\equal{#1}{0.03125_poissonic0.5}}{\(0.043\)}{\ifthenelse{\equal{#1}{0.03125}}{\(0.054\)}{\ifthenelse{\equal{#1}{"0.25_noisvarb0.75"}}{\(0.012\)}{\ifthenelse{\equal{#1}{0.25}}{\(0.012\)}{}}}}}}
\newcommand{\attentionGNVarbNoisStdMSE}[1]{\ifthenelse{\equal{#1}{"0.03125_noisvarb0.75"}}{\(0.031\)}{\ifthenelse{\equal{#1}{0.03125_poissonic0.5}}{\(0.019\)}{\ifthenelse{\equal{#1}{0.03125}}{\(0.025\)}{\ifthenelse{\equal{#1}{"0.25_noisvarb0.75"}}{\(0.005\)}{\ifthenelse{\equal{#1}{0.25}}{\(0.004\)}{}}}}}}
\newcommand{\attentionGNVarbNoisFixTrSNRMeanMSE}[1]{\ifthenelse{\equal{#1}{0.001953125}}{\(0.258\)}{\ifthenelse{\equal{#1}{0.00390625}}{\(0.175\)}{\ifthenelse{\equal{#1}{0.0078125}}{\(0.125\)}{\ifthenelse{\equal{#1}{0.015625}}{\(0.085\)}{\ifthenelse{\equal{#1}{0.03125}}{\(0.054\)}{\ifthenelse{\equal{#1}{0.0625}}{\(0.032\)}{\ifthenelse{\equal{#1}{0.125}}{\(0.020\)}{\ifthenelse{\equal{#1}{0.25}}{\(0.014\)}{}}}}}}}}}
\newcommand{\attentionGNVarbNoisFixTrSNRStdMSE}[1]{\ifthenelse{\equal{#1}{0.001953125}}{\(0.132\)}{\ifthenelse{\equal{#1}{0.00390625}}{\(0.100\)}{\ifthenelse{\equal{#1}{0.0078125}}{\(0.069\)}{\ifthenelse{\equal{#1}{0.015625}}{\(0.043\)}{\ifthenelse{\equal{#1}{0.03125}}{\(0.025\)}{\ifthenelse{\equal{#1}{0.0625}}{\(0.014\)}{\ifthenelse{\equal{#1}{0.125}}{\(0.007\)}{\ifthenelse{\equal{#1}{0.25}}{\(0.004\)}{}}}}}}}}}
\newcommand{\attentionGNVarbNoisFixTrLowSNRMeanMSE}[1]{\ifthenelse{\equal{#1}{0.0009765625}}{\(0.241\)}{\ifthenelse{\equal{#1}{0.001953125}}{\(0.204\)}{\ifthenelse{\equal{#1}{0.00390625}}{\(0.163\)}{\ifthenelse{\equal{#1}{0.0078125}}{\(0.123\)}{\ifthenelse{\equal{#1}{0.015625}}{\(0.087\)}{\ifthenelse{\equal{#1}{0.03125}}{\(0.059\)}{\ifthenelse{\equal{#1}{0.0625}}{\(0.044\)}{\ifthenelse{\equal{#1}{0.125}}{\(0.038\)}{\ifthenelse{\equal{#1}{0.25}}{\(0.035\)}{}}}}}}}}}}
\newcommand{\attentionGNVarbNoisFixTrLowSNRStdMSE}[1]{\ifthenelse{\equal{#1}{0.0009765625}}{\(0.140\)}{\ifthenelse{\equal{#1}{0.001953125}}{\(0.123\)}{\ifthenelse{\equal{#1}{0.00390625}}{\(0.097\)}{\ifthenelse{\equal{#1}{0.0078125}}{\(0.068\)}{\ifthenelse{\equal{#1}{0.015625}}{\(0.043\)}{\ifthenelse{\equal{#1}{0.03125}}{\(0.025\)}{\ifthenelse{\equal{#1}{0.0625}}{\(0.016\)}{\ifthenelse{\equal{#1}{0.125}}{\(0.013\)}{\ifthenelse{\equal{#1}{0.25}}{\(0.012\)}{}}}}}}}}}}
\newcommand{\attentionGNPoissMeanMSE}[1]{\ifthenelse{\equal{#1}{0.03125_poissonic0.5}}{\(0.038\)}{\ifthenelse{\equal{#1}{0.03125}}{\(0.065\)}{}}}
\newcommand{\attentionGNPoissStdMSE}[1]{\ifthenelse{\equal{#1}{0.03125_poissonic0.5}}{\(0.018\)}{\ifthenelse{\equal{#1}{0.03125}}{\(0.032\)}{}}}
\newcommand{\wienerMeanMSE}[1]{\ifthenelse{\equal{#1}{0.0009765625}}{\(0.287\)}{\ifthenelse{\equal{#1}{0.001953125}}{\(0.268\)}{\ifthenelse{\equal{#1}{0.00390625}}{\(0.240\)}{\ifthenelse{\equal{#1}{0.0078125}}{\(0.206\)}{\ifthenelse{\equal{#1}{0.015625}}{\(0.169\)}{\ifthenelse{\equal{#1}{0.03125}}{\(0.135\)}{\ifthenelse{\equal{#1}{0.0625}}{\(0.105\)}{\ifthenelse{\equal{#1}{0.125}}{\(0.080\)}{\ifthenelse{\equal{#1}{0.25}}{\(0.060\)}{}}}}}}}}}}
\newcommand{\wienerStdMSE}[1]{\ifthenelse{\equal{#1}{0.0009765625}}{\(0.147\)}{\ifthenelse{\equal{#1}{0.001953125}}{\(0.137\)}{\ifthenelse{\equal{#1}{0.00390625}}{\(0.120\)}{\ifthenelse{\equal{#1}{0.0078125}}{\(0.097\)}{\ifthenelse{\equal{#1}{0.015625}}{\(0.074\)}{\ifthenelse{\equal{#1}{0.03125}}{\(0.054\)}{\ifthenelse{\equal{#1}{0.0625}}{\(0.039\)}{\ifthenelse{\equal{#1}{0.125}}{\(0.028\)}{\ifthenelse{\equal{#1}{0.25}}{\(0.020\)}{}}}}}}}}}}
\newcommand{\polcnnMeanMSE}[1]{\ifthenelse{\equal{#1}{0.0009765625}}{\(0.271\)}{\ifthenelse{\equal{#1}{0.001953125}}{\(0.249\)}{\ifthenelse{\equal{#1}{0.00390625}}{\(0.209\)}{\ifthenelse{\equal{#1}{0.0078125}}{\(0.167\)}{\ifthenelse{\equal{#1}{0.015625}}{\(0.127\)}{\ifthenelse{\equal{#1}{0.03125}}{\(0.093\)}{\ifthenelse{\equal{#1}{0.0625}}{\(0.067\)}{\ifthenelse{\equal{#1}{0.125}}{\(0.047\)}{\ifthenelse{\equal{#1}{0.25}}{\(0.033\)}{}}}}}}}}}}
\newcommand{\polcnnStdMSE}[1]{\ifthenelse{\equal{#1}{0.0009765625}}{\(0.152\)}{\ifthenelse{\equal{#1}{0.001953125}}{\(0.142\)}{\ifthenelse{\equal{#1}{0.00390625}}{\(0.119\)}{\ifthenelse{\equal{#1}{0.0078125}}{\(0.089\)}{\ifthenelse{\equal{#1}{0.015625}}{\(0.062\)}{\ifthenelse{\equal{#1}{0.03125}}{\(0.040\)}{\ifthenelse{\equal{#1}{0.0625}}{\(0.026\)}{\ifthenelse{\equal{#1}{0.125}}{\(0.016\)}{\ifthenelse{\equal{#1}{0.25}}{\(0.011\)}{}}}}}}}}}}

\begin{figure*}
\begin{center}

\newcommand{\fnt}{\footnotesize \centering}
\newcommand{\modellabelwidth}{20mm}
\newcommand{\imagespacing}{1mm}
\newcommand{\imagesize}{20mm}
\newcommand{\imgone}{0007}
\newcommand{\imgtwo}{0011}
\newcommand{\imgthree}{0018}
\newcommand{\imgfour}{0026}
\newcommand{\imgfive}{0031}

\newcommand{\imgonepair}{0003-0008}
\newcommand{\imgtwopair}{0005-0008}
\newcommand{\imgthreepair}{0009-0000}
\newcommand{\imgfourpair}{0013-0000}
\newcommand{\imgfivepair}{0015-0008}

\newcommand{\standarddenoisproj}[3]{\includegraphics[width=\imagesize]{figures/denoiserExamples/snr#1/#2_#3.png}}
\newcommand{\ctfdenoisproj}[3]{\includegraphics[width=\imagesize]{figures/denoiserExamples/snr#1_ctf_shift8/#2_#3.png}}
\newcommand{\parakeetdenoisproj}[3]{\includegraphics[width=\imagesize]{figures/denoiserExamples/parakeet/#1/set#2_img#3.png}}
\newcommand{\thissnrstandardexample}[2]{\standarddenoisproj{0.03125}{#1}{#2}}
\newcommand{\thissnrctfexample}[2]{\ctfdenoisproj{0.25}{#1}{#2}}
\begin{tikzpicture}
    \node (clean-01) [inner sep=0mm] {\thissnrctfexample{clean}{\imgone}};
    \node (clean-02) [below=\imagespacing of clean-01, inner sep=0mm] {\thissnrctfexample{clean}{\imgtwo}};
    \node (clean-03) [below=\imagespacing of clean-02, inner sep=0mm] {\thissnrctfexample{clean}{\imgthree}};
    \node (clean-04) [below=\imagespacing of clean-03, inner sep=0mm] {\thissnrctfexample{clean}{\imgfour}};
    \node (clean-05) [below=\imagespacing of clean-04, inner sep=0mm] {\thissnrctfexample{clean}{\imgfive}};
    \node (clean-para) [below=\imagespacing of clean-05, inner sep=0mm] {\parakeetdenoisproj{clean}{4}{0}};


    \node (clean-label) [above=\imagespacing of clean-01, align=center, text width=\modellabelwidth] {\fnt Clean};

    \node (noisy-01) [right=\imagespacing of clean-01, inner sep=0mm] {\thissnrctfexample{noisy}{\imgone}};
    \node (noisy-02) [right=\imagespacing of clean-02, inner sep=0mm] {\thissnrctfexample{noisy}{\imgtwo}};
    \node (noisy-03) [right=\imagespacing of clean-03, inner sep=0mm] {\thissnrctfexample{noisy}{\imgthree}};
    \node (noisy-04) [right=\imagespacing of clean-04, inner sep=0mm] {\thissnrctfexample{noisy}{\imgfour}};
    \node (noisy-05) [right=\imagespacing of clean-05, inner sep=0mm] {\thissnrctfexample{noisy}{\imgfive}};
    \node (noisy-para) [right=\imagespacing of clean-para, inner sep=0mm] {\parakeetdenoisproj{noisy}{4}{0}};

    \node (noisy-label) [above=\imagespacing of noisy-01, align=center, text width=\modellabelwidth] {\fnt Noisy};

    \node (dncnn-01) [right=\imagespacing of noisy-01, inner sep=0mm] {\thissnrctfexample{dncnn}{\imgone}};
    \node (dncnn-02) [right=\imagespacing of noisy-02, inner sep=0mm] {\thissnrctfexample{dncnn}{\imgtwo}};
    \node (dncnn-03) [right=\imagespacing of noisy-03, inner sep=0mm] {\thissnrctfexample{dncnn}{\imgthree}};
    \node (dncnn-04) [right=\imagespacing of noisy-04, inner sep=0mm] {\thissnrctfexample{dncnn}{\imgfour}};
    \node (dncnn-05) [right=\imagespacing of noisy-05, inner sep=0mm] {\thissnrctfexample{dncnn}{\imgfive}};
    \node (dncnn-para) [right=\imagespacing of noisy-para, inner sep=0mm] {\parakeetdenoisproj{dncnn}{4}{0}};

    \node (dncnn-label) [above=\imagespacing of dncnn-01, align=center, text width=\modellabelwidth] {\fnt DnCNN};

    \node (dnunet-01) [right=\imagespacing of dncnn-01, inner sep=0mm] {\thissnrctfexample{dnunet}{\imgone}};
    \node (dnunet-02) [right=\imagespacing of dncnn-02, inner sep=0mm] {\thissnrctfexample{dnunet}{\imgtwo}};
    \node (dnunet-03) [right=\imagespacing of dncnn-03, inner sep=0mm] {\thissnrctfexample{dnunet}{\imgthree}};
    \node (dnunet-04) [right=\imagespacing of dncnn-04, inner sep=0mm] {\thissnrctfexample{dnunet}{\imgfour}};
    \node (dnunet-05) [right=\imagespacing of dncnn-05, inner sep=0mm] {\thissnrctfexample{dnunet}{\imgfive}};
    \node (dnunet-para) [right=\imagespacing of dncnn-para, inner sep=0mm] {\parakeetdenoisproj{unet}{4}{0}};

    \node (dnunet-label) [above=\imagespacing of dnunet-01, align=center, text width=\modellabelwidth] {\fnt U-Net};

    \node (polcnn-01) [right=\imagespacing of dnunet-01, inner sep=0mm] {\thissnrctfexample{polcnn}{\imgone}};
    \node (polcnn-02) [right=\imagespacing of dnunet-02, inner sep=0mm] {\thissnrctfexample{polcnn}{\imgtwo}};
    \node (polcnn-03) [right=\imagespacing of dnunet-03, inner sep=0mm] {\thissnrctfexample{polcnn}{\imgthree}};
    \node (polcnn-04) [right=\imagespacing of dnunet-04, inner sep=0mm] {\thissnrctfexample{polcnn}{\imgfour}};
    \node (polcnn-05) [right=\imagespacing of dnunet-05, inner sep=0mm] {\thissnrctfexample{polcnn}{\imgfive}};
    \node (polcnn-para) [right=\imagespacing of dnunet-para, inner sep=0mm] {\parakeetdenoisproj{polcnn}{4}{0}};

    \node (polcnn-label) [above=\imagespacing of polcnn-01, align=center, text width=\modellabelwidth] {\fnt Polar CNN};

    \node (trafo-dir-01) [right=\imagespacing of polcnn-01, inner sep=0mm] {\thissnrctfexample{ss8}{\imgone}};
    \node (trafo-dir-02) [right=\imagespacing of polcnn-02, inner sep=0mm] {\thissnrctfexample{ss8}{\imgtwo}};
    \node (trafo-dir-03) [right=\imagespacing of polcnn-03, inner sep=0mm] {\thissnrctfexample{ss8}{\imgthree}};
    \node (trafo-dir-04) [right=\imagespacing of polcnn-04, inner sep=0mm] {\thissnrctfexample{ss8}{\imgfour}};
    \node (trafo-dir-05) [right=\imagespacing of polcnn-05, inner sep=0mm] {\thissnrctfexample{ss8}{\imgfive}};
    \node (trafo-dir-para) [right=\imagespacing of polcnn-para, inner sep=0mm] {\parakeetdenoisproj{ss8_grpNrm4}{0}{4}};

    \node (trafo-dir-label) [above=\imagespacing of trafo-dir-01, align=center, text width=\modellabelwidth] {\fnt Polar transf. (dir.)};

    \node (trafo-gen-01) [right=\imagespacing of trafo-dir-01, inner sep=0mm] {\thissnrctfexample{ss16}{\imgonepair}};
    \node (trafo-gen-02) [right=\imagespacing of trafo-dir-02, inner sep=0mm] {\thissnrctfexample{ss16}{\imgtwopair}};
    \node (trafo-gen-03) [right=\imagespacing of trafo-dir-03, inner sep=0mm] {\thissnrctfexample{ss16}{\imgthreepair}};
    \node (trafo-gen-04) [right=\imagespacing of trafo-dir-04, inner sep=0mm] {\thissnrctfexample{ss16}{\imgfourpair}};
    \node (trafo-gen-05) [right=\imagespacing of trafo-dir-05, inner sep=0mm] {\thissnrctfexample{ss16}{\imgfivepair}};
    \node (trafo-gen-para) [right=\imagespacing of trafo-dir-para, inner sep=0mm] {\parakeetdenoisproj{ss16_grpNrm4_dir2}{0}{4}};

    \node (trafo-gen-label) [above=\imagespacing of trafo-gen-01, align=center, text width=\modellabelwidth] {\fnt Polar transf. (gen.)};
\end{tikzpicture}
\hfill
\end{center}
\caption{\label{fig:denoisedExamples} Sample images with CTF, shifts, and Gaussian noise at $\text{SNR} = 0.02$ (top five rows), and for the Parakeet simulator (bottom row).}
\end{figure*}

\begin{figure*}
\begin{center}
\begin{subfigure}[c]{0.70\textwidth}
 \includegraphics[width=0.32\textwidth]{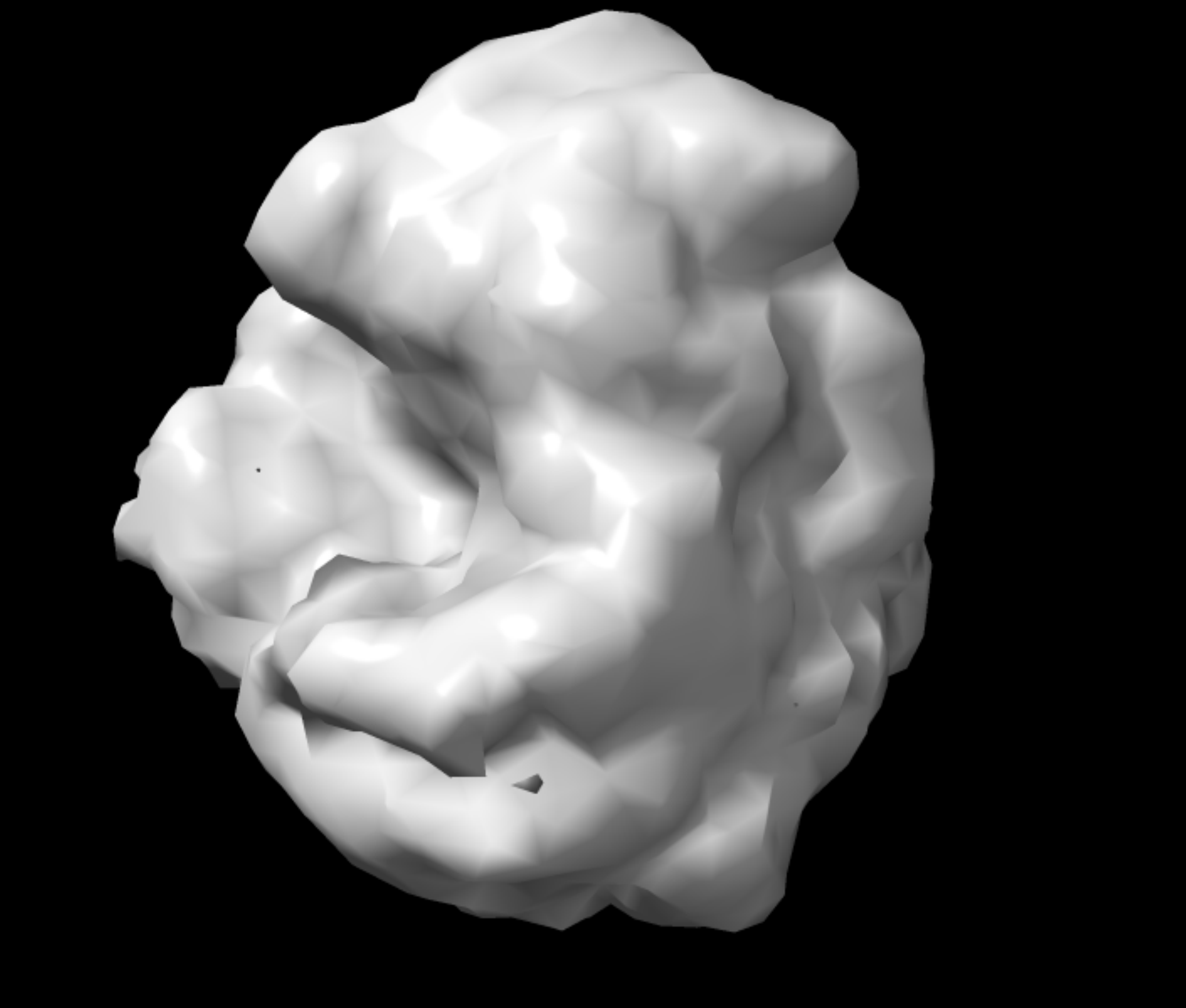}
 \includegraphics[width=0.32\textwidth]{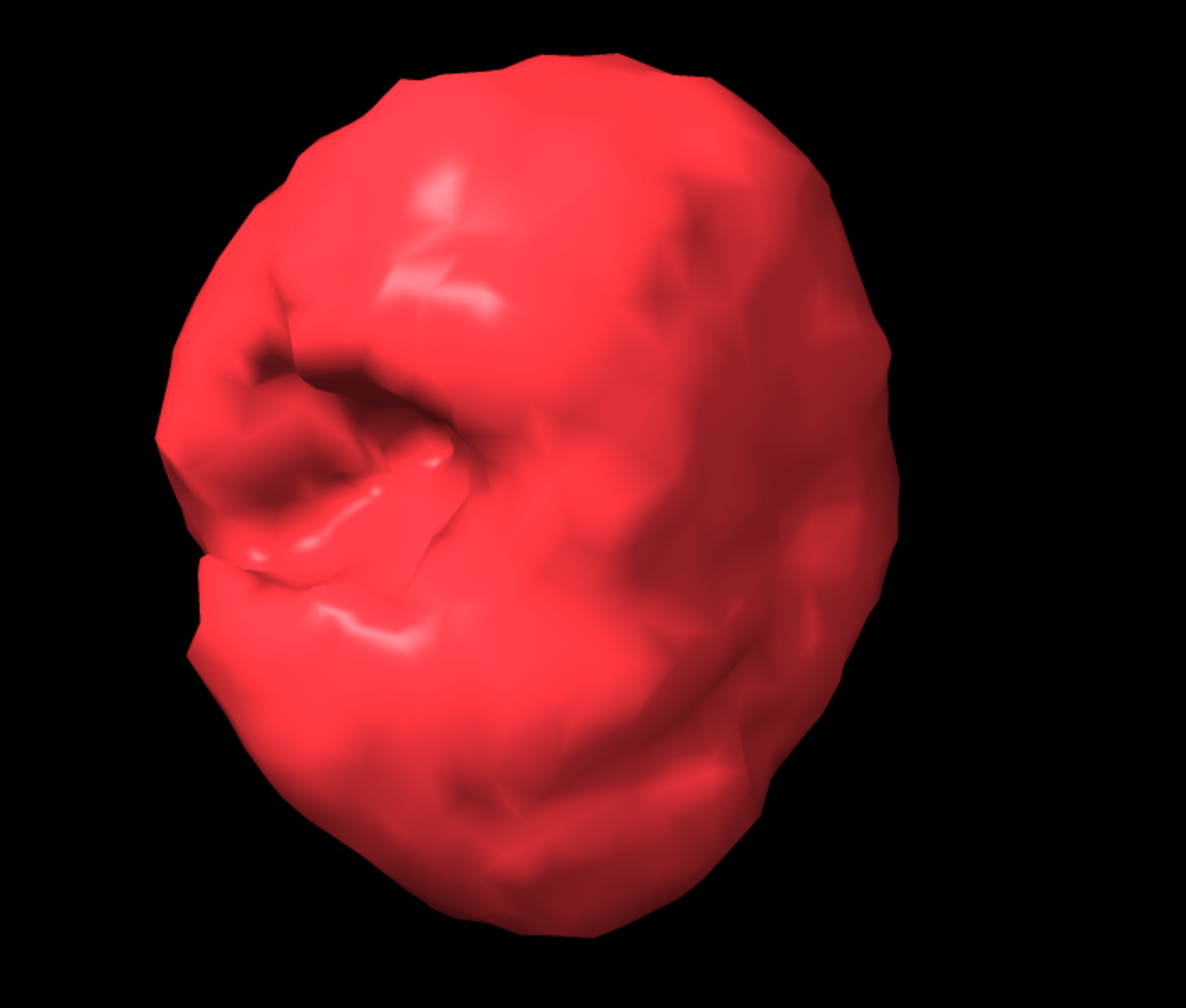}
 \includegraphics[width=0.32\textwidth]{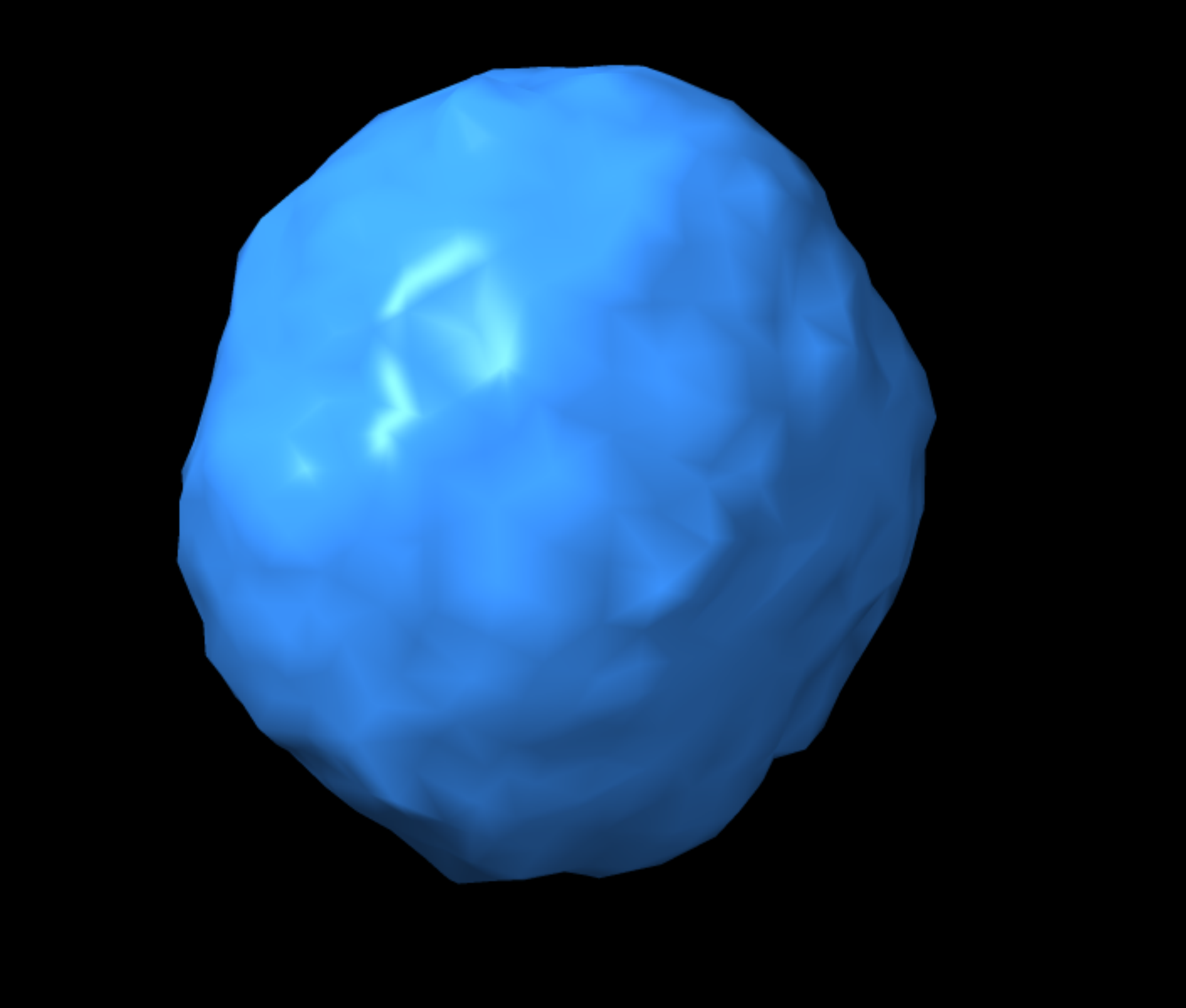}
\end{subfigure}
\begin{subfigure}[c]{0.29\textwidth}
    \begin{tikzpicture}
        \node (fsc) {\includegraphics[width=50mm, trim={3mm 2mm 0mm 0mm}, clip]{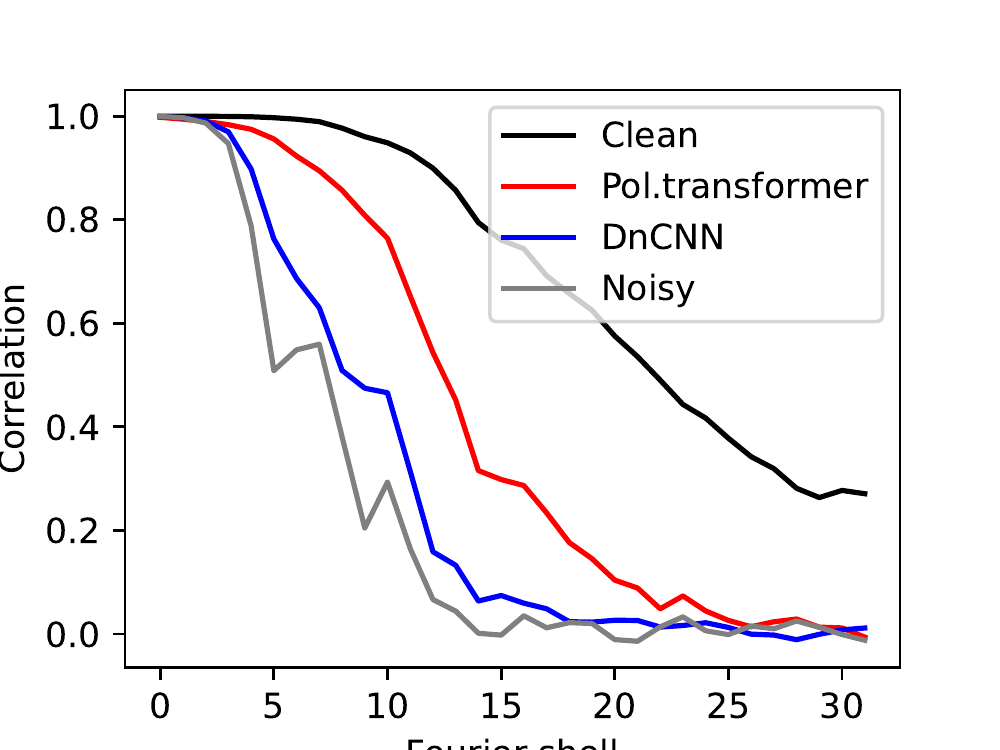}};
        \node [below of=fsc, yshift=-12mm] {\footnotesize Fourier shell};
        \node [left of=fsc, xshift=-17mm, rotate=90] {\footnotesize Correlation};
    \end{tikzpicture}
\end{subfigure}
\end{center}
    \caption{(left) 3D reconstructions of DMSO reductase from denoised images: ground truth (gray), polar transformer (red), and DnCNN (blue). (right) FSC curves of the reconstructions.}
 \label{fig:reconstrFSC}
\end{figure*}

\subsection{Denoising Results}

Table~\ref{table:results} shows the relative MSE ($\| f(y) - x\|^2 / \|x\|^2$) results on the denoising task for $\text{SNR} = 0.02$ (measured as the average pixel energy of the clean images divided by the noise variance).
First, we see that the polar CNN outperforms the other single-image denoisers (the DnCNN and U-Net) which do not explicitly encode rotational equivariance (and therefore perform slightly worse) but learn it through the data augmentation implicit in the training data.
Going beyond the polar CNN, we see that the polar transformer model consistently outperforms single-image methods when given a directional image set (i.e., the underlying clean images are the same up to in-plane rotation).
By studying the attention coefficients, we can also verify that the transformer is able to correctly recover the rotation angle in a robust manner (see Appendix~\ref{sec:angularAttention}).
Finally, we see that the polar transformer is also able to cluster sets of noisy images in the general set denoising task.
While performance is strictly worse, it remains quite close to the directional set results.
In terms of inference time, the single-image denoisers take on the order of $0.1~\text{ms}$ per image while the polar transformer (for sets of eight images) takes on the order of $1~\text{ms}$ per image on an A100 GPU.

We should also note that all methods outperform corresponding class averaging approaches.
For an $\text{SNR}$ of $0.02$, the relative MSE of the noisy image is $1/0.02 = 50$.
Since class averaging reduces noise by averaging a set of $K$ images, the lowest relative MSE it can achieve is $50 / K$;
to reach the performance levels given in Table~\ref{table:results}, it would therefore need around 1000 images.

Looking at the activation coefficients of the attention module (Appendix~\ref{sec:clusteringAttention}), we see that these indeed cluster the different viewing directions as expected.
The example images of Figure~\ref{fig:denoisedExamples} (top five rows) show how the polar CNN produces comparable quality to that of DnCNN and U-Net, but that the polar transformer produces sharper, higher-quality images.
For more details, including PSNR, SSIM and resolution metrics as well as results for larger image sets, see Appendix~\ref{sec:moreResults}.

Finally, we note that the models are not very sensitive to the noise distribution: models trained on Gaussian noise can also be applied successfully to Poisson-noise images (see Appendix~\ref{sec:poissonNoiseDetails}).
There is a small loss in MSE, but overall features are preserved.

\begin{table}
    \begin{center}
        \begin{tabular}{l|c}
            Method & Relative MSE \\
            \hline
            DnCNN & 0.088 \\
            U-Net & 0.089 \\
            Polar CNN & 0.081 \\ 
            \hline
            Polar transformer (dir.) & 0.042 \\ 
            Polar transformer (gen.) & 0.049 \\ 
        \end{tabular}
    \end{center}
    \caption{
        \label{table:results}
        Denoising performance of the proposed methods compared to standard baselines on simulated cryo-EM data with CTFs and shifts for $\text{SNR} = 0.02$.
        The top three methods operate on single images while the transformer methods operate on sets of images (the directional regime has eight images, all in the same viewing direction, while the general regime has sixteen images split between two viewing directions).
    }
\end{table}

Although the forward model used in the above experiments incorporates the main aspects of the cryo-EM imaging process, it is an approximation.
To assess qualitatively how far the current models are from being usable with true micrographs, we also applied them to some projections generated with the Parakeet digital twin~\cite{Parkhurst2021Parakeet}, which uses multislice simulations to produce more accurate synthetic images.
We used Parakeet to generate a ``clean'' projection (by taking a very high electron dose) and a noisy one to which the single-image denoisers (DnCNN, U-Net, and polar CNN) were applied.
Each was able to recover the clean image well, but all suffered some loss of detail (see Figure~\ref{fig:denoisedExamples}, bottom row).

To evaluate the polar transformer, we generated an additional set of 10 000 projections and picked two sets of images of similar viewing direction (within ten degrees, but with arbitrary in-plane rotation).
The first set, containing eight images, was given to the directional-set transformer, while another, containing sixteen images, was given to the general-set transformer.
This was sufficient to yield improved results compared to the single-image denoisers (see Figure~\ref{fig:denoisedExamples}, bottom row).

\subsection{3D Reconstruction}
\label{sec:reconstruction}

While there is value in denoising projection images themselves, the ultimate goal of cryo-EM processing is to construct a 3D model of the imaged molecule.
This is typically done in two steps: an initial, low-resolution reconstruction, known as an \emph{ab initio} model~\citep{Punjani2017CryoSPARC, VanHeel1987Angular}, which is followed by high-resolution refinement.

We will focus on a class of methods based on common lines~\citep{Singer2011CommonLines, Caspy2021Cryo}, which relies on the intersection geometry of the projection images in Fourier space.
To evaluate the effect of denoising on these reconstruction methods, we generated $2 048$ noisy projection images of DMSO reductase (PDB ID 1DMR) at $\text{SNR} = 0.25$ that we processed either using a DnCNN or the polar transformer (trained on directional sets).
For the latter, we clustered the images into sets of eight images each according to their viewing direction (up to in-plane rotation), mimicking the 2D classification step.
These sets were then fed into the polar transformer model.

Both sets of denoised images were then used as input to the \texttt{sync3n} ab initio reconstruction algorithm implemented in the ASPIRE package~\cite{Aspire2025}.
Compared to the ground truth orientations, the DnCNN images achieved a mean angular distance of $111^\circ$, while the polar transformer achieved $69.7^\circ$.
The resulting orientations were then paired with the original noisy images (to avoid bias induced by the denoising process) to obtain 3D reconstructions by inverting the forward model using the method of least squares.
The results are shown in Figure~\ref{fig:reconstrFSC}.
We see that the increased accuracy in orientation estimation achieved by the polar transformer results in more accurate 3D models.
This is also reflected in the Fourier shell correlation~\citep{Harauz1986filters} (see Figure~\ref{fig:reconstrFSC}), where the estimated orientations from the polar transformer better capture higher-resolution information in the 3D density map.
For more reconstruction examples, see Appendix~\ref{sec:reconstrReliability}.

\section{CONCLUSION}
\label{sec:conclusion}

In this work, we have presented a new, powerful architecture for cryo-EM image processing: the polar transformer.
While this model has significant potential as demonstrated by its performance on the denoising task, more work remains before it can be applied to practical problems.
Nonetheless, we believe that the polar transformer provides an useful component to an end-to-end cryo-EM reconstruction pipeline.

\section*{Acknowledgments}
The authors would like to thank Eftychios Pnevmatikakis, who worked on an earlier prototype of the above architecture and provided invaluable advice.
Similarly, the authors would like to thank Olivier Verdier for his thoughtful comments in discussions of our approach and results.
Further thanks to Axel Janson for generating the Parakeet test data.
This work was funded by grants from the Göran Gustafsson Foundation and the Swedish Research Council through grant no. 2023-04143.
The computations were enabled by resources provided by the National Academic Infrastructure for Supercomputing in Sweden (NAISS), partially funded by the Swedish Research Council through grant agreement no. 2022-06725.
The Flatiron Institute is a division of the Simons Foundation.

\bibliography{aistats2026_paper}

\begin{thebibliography}{}

\bibitem[Barnett et~al., 2017]{Barnett2017Marching}
Barnett, A., Greengard, L., Pataki, A., and Spivak, M. (2017).
\newblock Rapid solution of the cryo-{EM} reconstruction problem by frequency
  marching.
\newblock {\em SIAM Journal on Imaging Sciences}, 10(3):1170--1195.

\bibitem[Barnett et~al., 2019]{Barnett2019FINUFFT}
Barnett, A.~H., Magland, J., and af~Klinteberg, L. (2019).
\newblock A parallel nonuniform fast {Fourier} transform library based on an
  ``exponential of semicircle'' kernel.
\newblock {\em SIAM Journal on Scientific Computing}, 41(5):C479--C504.

\bibitem[Bartesaghi et~al., 2018]{Bartesaghi2018Atomic}
Bartesaghi, A., Aguerrebere, C., Falconieri, V., Banerjee, S., Earl, L.~A.,
  Zhu, X., Grigorieff, N., Milne, J.~L., Sapiro, G., Wu, X., and Subramaniam,
  S. (2018).
\newblock Atomic resolution cryo-{EM} structure of $\beta$-galactosidase.
\newblock {\em Structure}, 26(6):848--856.e3.

\bibitem[Bendory et~al., 2020]{Bendory2020SPCEM}
Bendory, T., Bartesaghi, A., and Singer, A. (2020).
\newblock Single-particle cryo-electron microscopy: Mathematical theory,
  computational challenges, and opportunities.
\newblock {\em IEEE Signal Processing Magazine}, 37(2):58–76.

\bibitem[Bepler et~al., 2020]{Bepler2019TopazDenoise}
Bepler, T., Noble, A.~J., and Berger, B. (2020).
\newblock Topaz-denoise: general deep denoising models for {cryoEM} and
  {cryoET}.
\newblock {\em Nature Communications}, 11(1).

\bibitem[Bertero et~al., 2021]{Bertero2021InverseProblems}
Bertero, M., Boccacci, P., and De~Mol, C. (2021).
\newblock {\em Introduction to Inverse Problems in Imaging}.
\newblock 2 edition.

\bibitem[Bhamre et~al., 2016]{Bhamre2016Denois}
Bhamre, T., Zhang, T., and Singer, A. (2016).
\newblock Denoising and covariance estimation of single particle cryo-{EM}
  images.
\newblock {\em Journal of Structural Biology}, 195(1):72–81.

\bibitem[Bibas et~al., 2021]{Bibas2021RotInvFeat}
Bibas, K., Weiss-Dicker, G., Cohen, D., Cahan, N., and Greenspan, H. (2021).
\newblock Learning rotation invariant features for cryogenic electron
  microscopy image reconstruction.
\newblock In {\em Proc. ISBI}, pages 563--566.

\bibitem[Brown et~al., 2020]{Brown2020GPT3}
Brown, T.~B., Mann, B., Ryder, N., Subbiah, M., Kaplan, J., Dhariwal, P.,
  Neelakantan, A., Shyam, P., Sastry, G., Askell, A., Agarwal, S.,
  Herbert-Voss, A., Krueger, G., Henighan, T., Child, R., Ramesh, A., Ziegler,
  D.~M., Wu, J., Winter, C., Hesse, C., Chen, M., Sigler, E., Litwin, M., Gray,
  S., Chess, B., Clark, J., Berner, C., McCandlish, S., Radford, A., Sutskever,
  I., and Amodei, D. (2020).
\newblock Language models are few-shot learners.

\bibitem[Buchholz et~al., 2019]{Buchholz2019CryoCARE}
Buchholz, T.-O., Jordan, M., Pigino, G., and Jug, F. (2019).
\newblock Cryo-care: Content-aware image restoration for cryo-transmission
  electron microscopy data.
\newblock In {\em 2019 IEEE 16th International Symposium on Biomedical Imaging
  (ISBI 2019)}, pages 502--506.

\bibitem[Caspy et~al., 2021]{Caspy2021Cryo}
Caspy, I., Neumann, E., Fadeeva, M., Liveanu, V., Savitsky, A., Frank, A.,
  Kalisman, Y.~L., Shkolnisky, Y., Murik, O., Treves, H., et~al. (2021).
\newblock {Cryo-EM} photosystem i structure reveals adaptation mechanisms to
  extreme high light in {Chlorella ohadii}.
\newblock {\em Nature Plants}, 7(9):1314--1322.

\bibitem[Cohen et~al., 2018]{Cohen2018Spherical}
Cohen, T.~S., Geiger, M., Köhler, J., and Welling, M. (2018).
\newblock Spherical {CNN}s.
\newblock In {\em International Conference on Learning Representations}.

\bibitem[Devlin et~al., 2018]{Devlin2018Bert}
Devlin, J., Chang, M.-W., Lee, K., and Toutanova, K. (2018).
\newblock Bert: {Pre-training} of deep bidirectional transformers for language
  understanding.
\newblock {\em arXiv preprint}.

\bibitem[{\relax DLMF}, 2025]{DLMF}
{\relax DLMF} (2025).
\newblock {\it NIST Digital Library of Mathematical Functions}.
\newblock F.~W.~J. Olver, A.~B. {Olde Daalhuis}, D.~W. Lozier, B.~I. Schneider,
  R.~F. Boisvert, C.~W. Clark, B.~R. Miller, B.~V. Saunders, H.~S. Cohl, and
  M.~A. McClain, eds.

\bibitem[Dosovitskiy et~al., 2021]{Dosovitskiy2020VIT}
Dosovitskiy, A., Beyer, L., Kolesnikov, A., Weissenborn, D., Zhai, X.,
  Unterthiner, T., Dehghani, M., Minderer, M., Heigold, G., Gelly, S., et~al.
  (2021).
\newblock An image is worth 16x16 words: Transformers for image recognition at
  scale.
\newblock In {\em iclr}.

\bibitem[Dubochet et~al., 1982]{Dubochet1982CryoEM}
Dubochet, J., Lepault, J., Freeman, R., Berriman, J.~A., and Homo, J.-C.
  (1982).
\newblock Electron microscopy of frozen water and aqueous solutions.
\newblock {\em Journal of Microscopy}, 128(3):219--237.

\bibitem[Frank, 1996]{Frank1996CryoEM}
Frank, J. (1996).
\newblock {\em Three-Dimensional Electron Microscopy of Macromolecular
  Assemblies}.
\newblock Academic Press.

\bibitem[Frank et~al., 1996]{Frank1996SPIDER}
Frank, J., Radermacher, M., Penczek, P., Zhu, J., Li, Y., Ladjadj, M., and
  Leith, A. (1996).
\newblock {SPIDER and WEB}: processing and visualization of images in {3D}
  electron microscopy and related fields.
\newblock {\em Journal of Structural Biology}, 116(1):190--199.

\bibitem[Greengard and Lee, 2004]{Greengard2004NUFFT}
Greengard, L. and Lee, J.-Y. (2004).
\newblock Accelerating the nonuniform fast {Fourier} transform.
\newblock {\em SIAM Review}, 46(3):443--454.

\bibitem[Gupta et~al., 2021]{Gupta2021CryoGAN}
Gupta, H., McCann, M.~T., Donati, L., and Unser, M. (2021).
\newblock {CryoGAN}: {A} new reconstruction paradigm for single-particle
  cryo-{EM} via deep adversarial learning.
\newblock {\em IEEE Transactions on Computational Imaging}, 7:759--774.

\bibitem[Gurrola-Ramos et~al., 2021]{GurollaRamos2021Unet}
Gurrola-Ramos, J., Dalmau, O., and Alarc{\'o}n, T.~E. (2021).
\newblock A residual dense u-net neural network for image denoising.
\newblock {\em IEEE Access}, 9:31742--31754.

\bibitem[Harauz and van Heel, 1986]{Harauz1986filters}
Harauz, G. and van Heel, M. (1986).
\newblock Exact filters for general geometry three dimensional reconstruction.
\newblock {\em Optik}, 73(4):146--156.

\bibitem[Iudin et~al., 2022]{Iudin2022EMPIAR}
Iudin, A., Korir, P.~K., Somasundharam, S., Weyand, S., Cattavitello, C.,
  Fonseca, N., Salih, O., Kleywegt, G., and Patwardhan, A. (2022).
\newblock {EMPIAR}: the electron microscopy public image archive.
\newblock {\em Nucleic Acids Research}, 51(D1):D1503--D1511.

\bibitem[Kay, 1993]{Kay1993Fundamentals}
Kay, S.~M. (1993).
\newblock {\em Fundamentals of Statistical Signal Processing: Estimation
  Theory}.
\newblock Prentice Hall.

\bibitem[Kimanius et~al., 2024]{Kimanius2024Prior}
Kimanius, D., Jamali, K., Wilkinson, M.~E., L{\"o}vestam, S., Velazhahan, V.,
  Nakane, T., and Scheres, S.~H. (2024).
\newblock Data-driven regularization lowers the size barrier of {cryo-EM}
  structure determination.
\newblock {\em Nature Methods}, 21(7):1216--1221.

\bibitem[Kimanius et~al., 2021]{Kimanius2021Prior}
Kimanius, D., Zickert, G., Nakane, T., Adler, J., Lunz, S., Sch{\"o}nlieb,
  C.-B., {\"O}ktem, O., and Scheres, S.~H. (2021).
\newblock Exploiting prior knowledge about biological macromolecules in
  {cryo-EM} structure determination.
\newblock {\em IUCrJ}, 8:60--75.

\bibitem[Kingma and Ba, 2017]{Kingma2017Adam}
Kingma, D.~P. and Ba, J. (2017).
\newblock Adam: {A} method for stochastic optimization.
\newblock {\em arXiv preprint}.

\bibitem[Kondor et~al., 2018]{Kondor2018CGNets}
Kondor, R., Lin, Z., and Trivedi, S. (2018).
\newblock {Clebsch--Gordan} nets: a fully {Fourier} space spherical
  convolutional neural network.
\newblock In {\em Advances in Neural Information Processing Systems},
  volume~31.

\bibitem[Kondor and Trivedi, 2018]{Kondor2018Equivariance}
Kondor, R. and Trivedi, S. (2018).
\newblock On the generalization of equivariance and convolution in neural
  networks to the action of compact groups.
\newblock In {\em Proceedings of the 35th International Conference on Machine
  Learning}, volume~80, pages 2747--2755. PMLR.

\bibitem[Kwon et~al., 2023]{Kwon2023RotInvRepr}
Kwon, S., Choi, J.~Y., and Ryu, E.~K. (2023).
\newblock Rotation and translation invariant representation learning with
  implicit neural representations.
\newblock In {\em Proc. ICML}, volume 202, pages 18037--18056. PMLR.

\bibitem[Lehtinen et~al., 2018]{Lehtinen2018Noise2Noise}
Lehtinen, J., Munkberg, J., Hasselgren, J., Laine, S., Karras, T., Aittala, M.,
  and Aila, T. (2018).
\newblock {N}oise2{N}oise: {Learning} image restoration without clean data.
\newblock In {\em Proc. ICML}, volume~80, pages 2965--2974.

\bibitem[Levy et~al., 2022]{Levy2022CryoFIRE}
Levy, A., Wetzstein, G., Martel, J.~N., Poitevin, F., and Zhong, E. (2022).
\newblock Amortized inference for heterogeneous reconstruction in cryo-{EM}.
\newblock In {\em Proc. NeurIPS}, volume~35, pages 13038--13049.

\bibitem[Nashed et~al., 2021]{Nashed2021CryoPoseNet}
Nashed, Y.~S., Poitevin, F., Gupta, H., Woollard, G., Kagan, M., Yoon, C.~H.,
  and Ratner, D. (2021).
\newblock {CryoPoseNet}: {End-to-end} simultaneous learning of single-particle
  orientation and {3D} map reconstruction from cryo-electron microscopy data.
\newblock In {\em Proc. ICCV}, pages 4066--4076.

\bibitem[Nasiri and Bepler, 2022]{Nasiri2022VAE}
Nasiri, A. and Bepler, T. (2022).
\newblock Unsupervised object representation learning using translation and
  rotation group equivariant {VAE}.
\newblock In {\em Proc. NeurIPS}, volume~35, pages 15255--15267.

\bibitem[Palovcak et~al., 2020]{Palovcak2020Denoising}
Palovcak, E., Asarnow, D., Campbell, M.~G., Yu, Z., and Cheng, Y. (2020).
\newblock {Enhancing the signal-to-noise ratio and generating contrast for
  {cryo-EM} images with convolutional neural networks}.
\newblock {\em IUCrJ}, 7(6):1142--1150.

\bibitem[Park and Chirikjian, 2014]{Park2014ClassAvg}
Park, W. and Chirikjian, G.~S. (2014).
\newblock An assembly automation approach to alignment of noncircular
  projections in electron microscopy.
\newblock {\em IEEE Transactions on Automation Science and Engineering},
  11(3):668--679.

\bibitem[Parkhurst et~al., 2021]{Parkhurst2021Parakeet}
Parkhurst, J.~M., Dumoux, M., Basham, M., Clare, D., Siebert, C.~A., Varslot,
  T., Kirkland, A., Naismith, J.~H., and Evans, G. (2021).
\newblock Parakeet: a digital twin software pipeline to assess the impact of
  experimental parameters on tomographic reconstructions for cryo-electron
  tomography.
\newblock {\em Open Biology}, 11(10):210160.

\bibitem[Punjani et~al., 2017]{Punjani2017CryoSPARC}
Punjani, A., Rubinstein, J.~L., Fleet, D.~J., and Brubaker, M.~A. (2017).
\newblock {cryoSPARC}: algorithms for rapid unsupervised cryo-{EM} structure
  determination.
\newblock {\em Nature Methods}, 14(3):290--296.

\bibitem[Ralston and Rabinowitz, 2001]{Ralston2001NumericalAnalysis}
Ralston, A. and Rabinowitz, P. (2001).
\newblock {\em A first course in numerical analysis}.
\newblock Courier Corporation.

\bibitem[Ronneberger et~al., 2015]{Ronneberger2015Unet}
Ronneberger, O., Fischer, P., and Brox, T. (2015).
\newblock U-net: {Convolutional} networks for biomedical image segmentation.
\newblock In {\em MICCAI 2015}, pages 234--241. Springer.

\bibitem[Scheres, 2012]{Scheres2012Relion}
Scheres, S.~H. (2012).
\newblock {RELION}: {Implementation} of a {Bayesian} approach to cryo-{EM}
  structure determination.
\newblock {\em Journal of Structural Biology}, 180(3):519--530.

\bibitem[Scheres, 2015]{Scheres2015ClassAvg}
Scheres, S.~H. (2015).
\newblock Semi-automated selection of cryo-{EM} particles in {RELION}-1.3.
\newblock {\em Journal of Structural Biology}, 189(2):114--122.

\bibitem[Schwab et~al., 2024]{Schwab2024DynaMight}
Schwab, J., Kimanius, D., Burt, A., Dendooven, T., and Scheres, S.~H. (2024).
\newblock {DynaMight}: estimating molecular motions with improved
  reconstruction from {cryo-EM} images.
\newblock {\em Nature Methods}, pages 1--8.

\bibitem[Shih et~al., 2021]{Shih2021cuFINUFFT}
Shih, Y.-h., Wright, G., Andén, J., Blaschke, J., and Barnett, A.~H. (2021).
\newblock {cuFINUFFT}: a load-balanced {GPU} library for general-purpose
  nonuniform {FFTs}.
\newblock In {\em 2021 IEEE International Parallel and Distributed Processing
  Symposium Workshops (IPDPSW)}, pages 688--697.

\bibitem[Sindelar and Grigorieff, 2011]{Sindelar2011Wiener}
Sindelar, C.~V. and Grigorieff, N. (2011).
\newblock An adaptation of the {Wiener} filter suitable for analyzing images of
  isolated single particles.
\newblock {\em Journal of structural biology}, 176(1):60--74.

\bibitem[Singer and Shkolnisky, 2011]{Singer2011CommonLines}
Singer, A. and Shkolnisky, Y. (2011).
\newblock Three-dimensional structure determination from common lines in
  {Cryo-EM} by eigenvectors and semidefinite programming.
\newblock {\em SIAM Journal on Imaging Sciences}, 4(2):543--572.

\bibitem[Singer and Sigworth, 2020]{Singer2020Methods}
Singer, A. and Sigworth, F.~J. (2020).
\newblock Computational methods for single-particle electron cryomicroscopy.
\newblock {\em Annual Review of Biomedical Data Science}, 3(Volume 3,
  2020):163--190.

\bibitem[Stein and Weiss, 1971]{Stein1971Fourier}
Stein, E.~M. and Weiss, G. (1971).
\newblock {\em Introduction to Fourier Analysis on Euclidean Spaces}.
\newblock Princeton University Press.

\bibitem[Tang et~al., 2007]{Tang2007Eman2}
Tang, G., Peng, L., Baldwin, P.~R., Mann, D.~S., Jiang, W., Rees, I., and
  Ludtke, S.~J. (2007).
\newblock {EMAN2}: {An} extensible image processing suite for electron
  microscopy.
\newblock {\em Journal of Structural Biology}, 157(1):38--46.
\newblock Software tools for macromolecular microscopy.

\bibitem[van Heel, 1984]{Heel1984MVClassif}
van Heel, M. (1984).
\newblock Multivariate statistical classification of noisy images (randomly
  oriented biological macromolecules).
\newblock {\em Ultramicroscopy}, 13(1–2):165–183.

\bibitem[van Heel, 1987]{VanHeel1987Angular}
van Heel, M. (1987).
\newblock Angular reconstitution: a posteriori assignment of projection
  directions for {3D} reconstruction.
\newblock {\em Ultramicroscopy}, 21(2):111--123.

\bibitem[Van~Heel, 1987]{VanHeel1987Similarity}
Van~Heel, M. (1987).
\newblock Similarity measures between images.
\newblock {\em Ultramicroscopy}, 21(1):95--100.

\bibitem[Vaswani et~al., 2017]{Vaswani2017Attention}
Vaswani, A., Shazeer, N., Parmar, N., Uszkoreit, J., Jones, L., Gomez, A.~N.,
  Kaiser, L., and Polosukhin, I. (2017).
\newblock Attention is all you need.
\newblock In {\em Proc. NeurIPS}, volume~30.

\bibitem[Vulovi\'{c} et~al., 2013]{Vulovic2013ImageFormation}
Vulovi\'{c}, M., Ravelli, R.~B., van Vliet, L.~J., Koster, A.~J., Lazi\'{c},
  I., L\"{u}cken, U., Rullg\r{a}rd, H., \"{O}ktem, O., and Rieger, B. (2013).
\newblock Image formation modeling in cryo-electron microscopy.
\newblock {\em Journal of Structural Biology}, 183(1):19--32.

\bibitem[Wright et~al., 2025]{Aspire2025}
Wright, G., Andén, J., Bansal, V., Xia, J., Langfield, C., Carmichael, J.,
  Sowattanangkul, K., Brook, R., Shi, Y., Heimowitz, A., Pragier, G., Sason,
  I., Moscovich, A., Shkolnisky, Y., and Singer, A. (2025).
\newblock Computationalcryoem/aspire-python: v0.13.2.

\bibitem[Wu and He, 2018]{Wu2018GroupNorm}
Wu, Y. and He, K. (2018).
\newblock Group normalization.
\newblock In {\em Proc. ECCV}, pages 3--19.

\bibitem[wwPDB consortium, 2018]{PDB2018}
wwPDB consortium (2018).
\newblock Protein data bank: the single global archive for {3D} macromolecular
  structure data.
\newblock {\em Nucleic Acids Research}, 47(D1):D520–D528.

\bibitem[Zhang et~al., 2017]{Zhang2017DnCNN}
Zhang, K., Zuo, W., Chen, Y., Meng, D., and Zhang, L. (2017).
\newblock Beyond a {Gaussian} denoiser: Residual learning of deep {CNN} for
  image denoising.
\newblock {\em IEEE Transactions on Image Processing}, 26(7):3142--3155.

\bibitem[Zhao and Singer, 2014a]{Zhao2014ClassAvg}
Zhao, Z. and Singer, A. (2014a).
\newblock Rotationally invariant image representation for viewing direction
  classification in cryo-{EM}.
\newblock {\em Journal of Structural Bioloy}, 186(1):153--166.

\bibitem[Zhao and Singer, 2014b]{Zhao2014Rotationally}
Zhao, Z. and Singer, A. (2014b).
\newblock Rotationally invariant image representation for viewing direction
  classification in {cryo-EM}.
\newblock {\em Journal of Structural Biology}, 186(1):153--166.

\bibitem[Zhong et~al., 2021]{Zhong2021CryoDRGN}
Zhong, E.~D., Bepler, T., Berger, B., and Davis, J.~H. (2021).
\newblock {CryoDRGN}: reconstruction of heterogeneous {cryo-EM} structures
  using neural networks.
\newblock {\em Nature Methods}, 18(2):176--185.

\end{thebibliography}

\clearpage

\appendix
\section{Inverting the polar representation}
\label{sec:convoApprox}

For the polar grid, the radii $r_0, r_1, \ldots, r_{N-1}$ are given by a Gauss--Jacobi quadrature rule over $[0, \sqrt{2} + \Delta]$ with parameters $(0, 1)$~\citep[Chapter 4.8--1]{Ralston2001NumericalAnalysis} for some $\Delta \ge 0$.
We denote the corresponding quadrature weights by $w_0, w_1, \ldots, w_{N-1}$.
The angles are given by $\gamma_m = 2\pi m / M$ for $m = 0, 1, \ldots, {M-1}$.
The resulting grid $\{(u_{nm}, v_{nm})\}_{nm}$ is then given by~(see Figure~\ref{fig:polar-decomp-grid})
\begin{equation}
    \label{eq:polar-grid}
        u_{nm} = r_n \cos \gamma_m \quad v_{nm} = r_n \sin \gamma_m.
\end{equation}

Applying the adjoint of~\eqref{eq:polar-decomp} to some polar image $z[n, m]$ gives back a Cartesian image $P^\transp z[i, j]$.
This does not recover the original image $x$, but the following proposition shows that it can be approximated as a convolution for $N$ and $M$ large enough.
\begin{prop}
    \label{prop:convoApprox}
    Let
    \begin{equation}
        \phi[i, j] = \frac{1}{\pi b^2 L^2} \exp\left(-\frac{i^2 + j^2}{b^2 L^2}\right).
    \end{equation}
    Then
    \begin{equation}
        \label{eq:deconv}
        P^\transp Px[i,j] = x \star \phi[i,j] + \varepsilon,
    \end{equation}
    where $\star$ denotes a discrete 2D convolution and
    \begin{equation}
        \varepsilon = O\left(\frac{(\sqrt{2} + \Delta)^2}{N^2 b^4} + \left(\frac{M!}{2M!}\right)^2 \left(\frac{2}{b^2}\right)^{M} + b e^{-\Delta^2 / b^2} \right).
    \end{equation}
\end{prop}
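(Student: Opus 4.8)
The plan is to compute $P^\transp P x[i,j]$ directly from the definition in~\eqref{eq:polar-decomp}, recognize the sum over polar grid points as a Gauss--Jacobi quadrature approximation to a continuous integral in polar coordinates, evaluate that integral in closed form to obtain the Gaussian convolution kernel $\phi$, and then track the three error contributions separately. Writing out $P^\transp P x[i,j] = \frac{1}{Z^2}\sum_{i',j'} x[i',j'] \sum_{n,m} w_n \, e^{-((u_{nm}-2i/L)^2 + (v_{nm}-2j/L)^2)/(2b^2)} e^{-((u_{nm}-2i'/L)^2 + (v_{nm}-2j'/L)^2)/(2b^2)}$, the inner double sum over $n$ and $m$ is, by construction of the radii $r_n$ and weights $w_n$ as a Gauss--Jacobi rule on $[0,\sqrt2+\Delta]$ and the uniform angular nodes $\gamma_m$, a quadrature approximation (in polar coordinates $r\,dr\,d\gamma$) of $\int_0^{2\pi}\int_0^{\sqrt2+\Delta} e^{-(\,\cdot\,)/(2b^2)} e^{-(\,\cdot\,)/(2b^2)} \, r\, dr\, d\gamma$ with integrand the product of the two Gaussian RBFs centered at $(2i/L, 2j/L)$ and $(2i'/L, 2j'/L)$.

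The key steps, in order: (i) Convert the inner sum to the target continuous integral over the full plane $\sR^2$; completing the square in the product of two Gaussians gives a Gaussian in the center variable times a Gaussian in $(2(i-i')/L, 2(j-j')/L)$ of width $b^2L^2$, and integrating out the center variable over all of $\sR^2$ yields exactly $\pi b^2 L^2 \cdot \exp(-((i-i')^2+(j-j')^2)/(b^2L^2))$, i.e.\ $Z^2/M \cdot \pi b^2 L^2 \cdot \phi[i-i',j-j'] \cdot (\text{constant})$ — here one checks that the normalization $Z = \sqrt{2\pi^3 M L^2 b^4}$ is chosen precisely so the prefactors collapse to give $x\star\phi$. (ii) Bound the angular quadrature error: since the angular integrand is $2\pi$-periodic and analytic, the uniform trapezoidal rule with $M$ nodes converges faster than any polynomial; the stated term $(M!/(2M)!)^2 (2/b^2)^M$ comes from estimating the Fourier coefficients of the Gaussian via its Taylor/Hermite expansion, contributing the geometric-type decay in $M$. (iii) Bound the radial Gauss--Jacobi quadrature error on $[0,\sqrt2+\Delta]$ with weight $r$ ($(0,1)$-Jacobi parameters): the standard error formula gives a term proportional to a high derivative of the Gaussian integrand, producing $(\sqrt2+\Delta)^2/(N^2 b^4)$. (iv) Bound the truncation error from integrating only over the disk of radius $\sqrt2+\Delta$ rather than all of $\sR^2$: the Cartesian grid lies in $[-1,1]^2$ so the nearest RBF center has distance at most $\sqrt2$ from the origin; the Gaussian tail beyond radius $\sqrt2+\Delta$ is $O(b\, e^{-\Delta^2/b^2})$. (v) Collect the three errors, absorb them into a single $x$-dependent constant (bounded since $x$ has fixed finite support), and assemble~\eqref{eq:deconv}.

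I expect step (ii), the angular quadrature bound, to be the main obstacle: getting the precise constant $(M!/(2M)!)^2 (2/b^2)^M$ requires a careful expansion of the product of Gaussians in the angular variable (e.g.\ expanding $e^{a\cos\gamma}$ in modified Bessel functions or a Hermite/Taylor series) and then estimating the aliasing error of the trapezoidal rule term by term — one must control how the Fourier coefficients of $\gamma \mapsto e^{-(\,\cdot\,)/(2b^2)}e^{-(\,\cdot\,)/(2b^2)}$ decay, and the factorial ratio suggests the bound is obtained by truncating a series at order $M$ and bounding the tail. The radial step (iii) is more routine given classical Gauss--Jacobi error bounds, and the truncation step (iv) is an elementary Gaussian tail estimate; the bookkeeping to see that $Z$ makes all constants match in step (i) is straightforward but must be done carefully.
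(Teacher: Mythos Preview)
Your overall strategy---write out $P^\transp P$ explicitly, complete the square in the product of Gaussians to split off the convolution factor, interpret the remaining inner sum as a polar quadrature of a smooth integrand, and track three error sources (angular quadrature, radial quadrature, truncation from the disk to $\sR^2$)---matches the paper's proof exactly. Your steps (i) and (iv) are carried out essentially as you describe.

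The gap is that you have the two quadrature error terms \emph{swapped}, and your step (iii) would not go through as stated. In the paper, the factorial term $\bigl(M!/(2M)!\bigr)^2(2/b^2)^M$ comes from the \emph{radial} Gauss--Jacobi rule: the classical error formula for an $M$-point Gauss--Jacobi rule involves the $(2M)$th derivative of the integrand, and the key computation is the bound $|I^{(k)}(r)| \le \sqrt{2^k k!}/b^k$ for $I(r)=\int_0^{2\pi} e^{-(r-\rho\cos\alpha)^2/b^2}e^{-\rho^2\sin^2\alpha/b^2}\,d\alpha$, obtained via the Hermite-polynomial representation of Gaussian derivatives. Conversely, the polynomial term $(\sqrt{2}+\Delta)^2/(N^2 b^4)$ comes from a \emph{crude second-order} trapezoidal bound on the angular sum---the paper only uses $|s_n''(\alpha)|\le C r_n/b^4$, not any spectral estimate---and then, in a step your outline is missing, sums the resulting pointwise error $\epsilon(r_n)=O(r_n)$ over $n$ by invoking Gauss--Jacobi exactness on the linear polynomial $r\mapsto r$. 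Your step (iii), which asks the Gauss--Jacobi formula (high derivatives over $(2N)!$) to produce an $O(1/N^2)$ bound, is internally inconsistent and would fail. (The paper's own notation appears to swap the symbols $N$ and $M$ relative to their roles as radial and angular grid sizes, which may have misled you, but the \emph{techniques}---second-derivative trapezoidal bound for the angular sum, Hermite-derivative Gauss--Jacobi bound for the radial sum---are unambiguous.)
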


\begin{proof}
To see why $P^\transp P$ can be approximated with a discrete convolution, let us write out the full expression
\begin{align}
    \nonumber
    &P^\transp Px[p, q] \\
    \nonumber
    &\quad= Z^{-2} \sum_{n=0}^{N-1} w_n \sum_{m=0}^{M-1} e^{-\frac{(u_{nm} - 2p)^2 + (v_{nm} - 2q)^2}{2b^2}} \\
    &\quad\quad\sum_{i,j=-L/2}^{L/2-1} x[i, j] e^{-\frac{(u_{nm}-2i/L)^2 + (v_{nm} - 2j/L)^2}{2b^2}} \\
    \nonumber
    &\quad= Z^{-2} \sum_{i,j=-L/2}^{L/2-1} 2\pi M e^{-\frac{(2i/L - 2p/L)^2 + (2j/L - 2q/L)^2}{4b^2}}x[i, j] \\
    &\quad\quad\sum_{n=0}^{N-1} \sum_{m=0}^{M-1} \frac{w_n}{2\pi M} e^{- \frac{((i+p)/L - u_{nm})^2 + ((j+q)/L - v_{nm})^2}{b^2}},
\end{align}
where we have used the identity
\begin{equation}
    (a - c)^2 + (b - c)^2 = \frac{1}{2} (a - b)^2 + 2\left(\frac{a+b}{2} - c\right)^2.
\end{equation}

We now set $x = (i+p) / L$ and $y = (j+q)/L$ and note that $(x, y) \in [-1, 1]^2$.
Recall that we have $u_{nm} = r_n \cos(\alpha_m)$ and $v_{nm} = r_n \sin(\alpha_m)$ as described in \eqref{eq:polar-grid}.
The sum over $n$ and $m$ above can therefore be written as
\begin{equation}
    \sum_{n=0}^{N-1} \sum_{m=0}^{M-1} \frac{w_n}{2\pi M} e^{-\frac{(x-r_n \cos(\alpha_m))^2 + (y - r_n \sin(\alpha_m))^2}{b^2}}.
\end{equation}

We first consider the sum over the angular index $m$ for a fixed $n$.
This is a sum of the periodic function $s_n(\alpha) = e^{-\frac{(x-r_n \cos(\alpha)^2 + (y-r_n \sin(\alpha))^2}{b^2}}$ sampled on a uniform grid over $[0, 2\pi)$ of size $M$.
Using a Fourier series decomposition of $s_n(\alpha)$, we can see that this sum is equal to
\begin{equation}
    \int_{0}^{2\pi} s_n(\alpha) d\alpha + \epsilon,
\end{equation}
where $|\epsilon| \le \frac{C}{N^2} \int_{0}^{2\pi} |s_n''(\alpha)| d\alpha$ for some constant $C$.
Differentiating $s_n(\alpha)$ and using the fact that $r_n < \sqrt{2} + \Delta$, we have that $|s_n''(\alpha)| \le C(\sqrt{2} + \Delta)r_n$ for some constant $C$.
This then gives
\begin{align}
    \nonumber
    &\sum_{m=0}^{M-1} \frac{1}{2\pi M} e^{-\frac{(x-r_n \cos(\alpha_m))^2 + (y-r_n\sin(\alpha_m))^2}{b^2}} \\
    &\quad= \sum_{m=0}^{M-1} \frac{1}{2\pi M} s(\alpha_m) \\
    &\quad= \int_{0}^{2\pi} s_n(\alpha) d\alpha + \epsilon(r_n)
\end{align}
where $|\epsilon(r)| \le C(\sqrt{2} + \Delta)r/b^4 N^2$.

We now multiply by $w_n$ and sum over $n$.
Since the error term $\epsilon(r)$ can be bounded by a linear function in $r$ and the Gauss--Jacobi quadrature integrates polynomials of degree $2M-1$ exactly, we have
\begin{align}
    \left|\sum_{m=0}^{M-1} w_n \epsilon(r_n) \right| &\le \frac{C(\sqrt{2} + \Delta)}{b^4 N^2} \int_{0}^{\sqrt{2} + \Delta} r^2 dr \\ &= \frac{C(\sqrt{2} + \Delta)^4}{b^4 N^2}.
\end{align}

What remains is thus to calculate
\begin{equation}
    \sum_{n=0}^{N-1} w_n I(r_n),
\end{equation}
where $I(r) = \int_0^{2\pi} s_n(\alpha) d\alpha$.
For the Gauss--Jacobi quadrature rule used here, the error is proportional to
\begin{equation}
    \frac{M!}{2M!} \left(\frac{(M+1)!}{(2M+1)!}\right)^2 I^{(2M)}(r)
\end{equation}
for some $r \in [0, \sqrt{2} + \Delta]$.
Our goal is therefore to bound the $2M$th derivative of $I(r)$.

We first note that $(x, y)$ can be written in a polar representation, obtaining $x = \rho \cos(\eta)$ and $y = \rho \sin(\eta)$ for $\rho \in [0, \sqrt{2}]$ and $\eta \in [0, 2\pi)$.
This allows us to rewrite $I(r)$ in the following manner
\begin{align}
    I(r) &= \int_0^{2\pi} e^{-\frac{(x-r\cos(\alpha))^2 + (y-r\sin(\alpha))^2}{b^2}} d\alpha \\
    &= \int_0^{2\pi} e^{-\frac{(r - \rho\cos(\alpha - \eta))^2}{b^2}} \cdot e^{-\frac{\rho^2 \sin^2(\alpha - \eta)}{b^2}} d\alpha \\
    &= \int_0^{2\pi} e^{-\frac{(r - \rho\cos(\alpha))^2}{b^2}} \cdot e^{-\frac{\rho\sin^2(\alpha)}{b^2}} d\alpha.
\end{align}
Setting $t(r) = e^{-\frac{(r-\rho\cos(\alpha)^2}{b^2}}$, we note that this is simply an affine transformation of the Gaussian function $r \mapsto e^{-r^2}$.
Consequently, its derivatives can be expressed using Hermite polynomials.
Specifically, if $H_k$ is the $k$th Hermite polynomial, we have that
\begin{equation}
    t^{(k)}(r) = \frac{(-1)^k}{b^k} H_k\left(\frac{r - \rho\cos(\alpha)}{b}\right) e^{-\frac{(r-\rho\cos(\alpha))^2}{b^2}}.
\end{equation}
From standard bounds on $H_k$ \citep[{(18.14.9)}]{DLMF}, we obtain that
\begin{equation}
    |t^{(k)}(r)| \le \frac{\sqrt{2^k k!}}{b^k}.
\end{equation}
Computing the $k$th derivative of $I(r)$, plugging in the above bound, and noting that the second factor in the integrand (which does not depend on $r$) is less than one, we obtain
\begin{equation}
    |I^{(k)}(r)| \le \frac{\sqrt{2^k k!}}{b^k}.
\end{equation}
The Gaussi--Jacobi quadrature error can therefore be bounded by
\begin{align}
    \nonumber
    &\frac{M!}{2M!} \left(\frac{(M+1)!}{(2M+1)!}\right)^2 \frac{\sqrt{2^{2M} 2M!}}{b^{2M}} \\
    &\quad= \frac{M!}{\sqrt{2M!}} \left(\frac{(M+1)!}{(2M+1)!}\right)^2 \left(\frac{2}{b^2}\right)^M \\
    &\quad\le \left(\frac{M!}{2M!}\right)^2 \left(\frac{2}{b^2}\right)^M.
\end{align}

Combining these results, we obtain that
\begin{align}
    \nonumber
    &\sum_{n=0}^{N-1} \sum_{m=0}^{M-1} \frac{w_n}{2\pi M} e^{-\frac{(x-r_n \cos(\alpha_m))^2 + (y - r_n \sin(\alpha_m))^2}{b^2}} \\
    \nonumber
    &\quad = \int_D e^{-\frac{(x - u)^2 + (y - v)^2}{b^2}} du dv \\
    &\quad\quad + O\left(\frac{(\sqrt{2} + \Delta)^2}{N^2 b^4} + \left(\frac{M!}{2M!}\right)^2 \left(\frac{2}{b^2}\right)^{M}\right)
\end{align}
for $D = \{(u,v) \mid u^2 + v^2 < (\sqrt{2} + \Delta)^2\}$.
We now need to approximate this integral.
To do this, we extend it to an integral over all of $\mathbb{R}^2$, which gives the result $\pi b^2$.
To quantify the error in the approximation, we must therefore bound
\begin{equation}
    \int_{\mathbb{R}^2 \setminus D} e^{-\frac{(x - u)^2 + (y - v)^2}{b^2}} du dv.
\end{equation}

We again consider polar coordinates for both $(x,y)$ and $(u,v)$, which transforms the above expression into
\begin{align}
    \nonumber
    &\int_{\sqrt{2} + \Delta}^{+\infty} \int_{0}^{2\pi} e^{-\frac{r^2 + \rho^2 - 2r\rho \cos(\alpha - \eta)}{b^2}} d\alpha dr \\
    &\quad= \int_{\sqrt{2} + \Delta}^{+\infty} e^{-\frac{r^2 + \rho^2}{b^2}} \int_{0}^{2\pi} e^{\frac{2r\rho \cos(\alpha)}{b^2}} d\alpha dr.
\end{align}
The innermost integral can be written as $2\pi I_0(2r\rho/b^2)$, where $I_0$ is the zeroth-order modified Bessel function of the first kind.
This gives
\begin{equation}
    2\pi \int_{\sqrt{2} + \Delta}^{+\infty} e^{-\frac{r^2 + \rho^2}{b^2}} I_0(2r\rho/b^2) dr.
\end{equation}
Bounding $I_0(2r\rho/b^2)$ by $e^{2r\rho/b^2}$ \citep[{(10.14.3)}]{DLMF}, we obtain the upper bound
\begin{align}
    \nonumber
    &2\pi \int_{\sqrt{2} + \Delta}^{+\infty} e^{-\frac{r^2 + \rho^2}{b^2}} e^{\frac{2r\rho}{b^2}} dr \\
    &\quad= 2\pi \int_{\sqrt{2} + \Delta}^{+\infty} e^{-\frac{(r - \rho)^2}{b^2}} dr = 2\pi \int_{\sqrt{2} + \Delta - \rho}^{+\infty} e^{-\frac{r^2}{b^2}} dr.
\end{align}
Since $\rho \le \sqrt{2}$, we have that $\sqrt{2} + \Delta - \rho \ge 0$, so we can bound this integral using standard results on the integral of a Gaussian function \citet[{(7.8.3)}]{DLMF} to obtain the bound
\begin{equation}
    2\pi b e^{\frac{(\sqrt{2} + \Delta - \rho)^2}{b^2}} \le 2\pi b e^{-\frac{\Delta^2}{b^2}}.
\end{equation}

This, together with the quadrature error bounds, gives us that
\begin{align}
    \nonumber
    &\sum_{n=0}^{N-1} \sum_{m=0}^{M-1} \frac{w_n}{2\pi M} e^{-\frac{(x-r_n \cos(\alpha_m))^2 + (y - r_n \sin(\alpha_m))^2}{b^2}} \\
    \nonumber
    &\quad = \pi b^2 \\
    &\quad\quad + O\left(\frac{(\sqrt{2} + \Delta)^2}{N^2 b^4} + \left(\frac{M!}{2M!}\right)^2 \left(\frac{2}{b^2}\right)^{M} + b e^{-\frac{\Delta^2}{b^2}} \right)
\end{align}
Plugging this into our expression for $P^\transp Px$ then gives us the desired result.
\end{proof}

Because of the above result, we can reconstruct $x$ by solving the deconvolution problem in~\eqref{eq:deconv}.
This can be done in several ways, most easily by approximating the discrete convolution with a circular convolution and solving it by pointwise division in the Fourier domain~\citep{Bertero2021InverseProblems}.
To ensure that this deconvolution problem is relatively well-posed, however, the Fourier transform of $\phi$ must not decay too fast -- in other words, we cannot choose $b$ to be too small.
We have found that choosing $b$ on the order of $1/L$ results in a well-conditioned deconvolution problem.

To evaluate the accuracy of this process, we computed the relative MSE obtained when converting to and from the polar representation.
This was done on the validation dataset used in our experiments containing clean 100 000 images, each of size $L = 64$.
We obtain the following results:
\begin{center}
    \begin{tabular}{llll}
        $N$ & $M$ & $b$ & Relative MSE \\
        \hline
        $L$ & $4L$ & $1/L$ & $0.00019$ \\  
        $L$ & $4L$ & $1/2L$ & $0.06333$ \\  
        $L$ & $4L$ & $2/L$ & $0.66780$ \\  

        $L$ & $3L$ & $1/L$ & $0.00019$ \\  
        $L$ & $2L$ & $1/L$ & $0.00022$ \\  
        $L$ & $3L/2$ & $1/L$ & $0.00117$ \\  

        $L/2$ & $4L$ & $1/L$ & $0.77263$ \\  
        $3L/4$ & $4L$ & $1/L$ & $0.00194$ \\  
        $3L/4$ & $3L/2$ & $1/L$ & $0.00294$     
    \end{tabular}
\end{center}

Given the low error and short running time (see Appendix~\ref{sec:compComplexity}), we will choose the last configuration ($N = 3L/4$, $M = 3L/2$, $b = 1/L$) for the remainder of the experiments in this work.

\section{Properties of $\phi$}
\label{sec:sumPhi}

Proposition~\ref{prop:convoApprox} shows that $P$ is an approximate isometry when restricted to smooth images.
Indeed, we can show that $\phi[i,j] = (\pi b^2 L^2)^{-1} e^{-\frac{i^2 + j^2}{b^2 L^2}}$ is a lowpass filter which sums to one.

For this, we use the Poisson summation formula~\citep[Chapter VII.2]{Stein1971Fourier}.
First, we note that the Fourier transform of the continuous function
\begin{equation}
    \phi(u, v) = \frac{1}{\pi b^2 L^2} e^{-\frac{u^2 + v^2}{b^2 L^2}}
\end{equation}
is given by
\begin{equation}
    \widehat{\phi} (\omega, \xi) = e^{-\pi^2 b^2 L^2 (\omega^2 + \xi^2)}.
\end{equation}
We thus have
\begin{align}
    \nonumber
    \sum_{i,j=-\infty}^{+\infty} \phi[i, j] &= \sum_{i,j=-\infty}^{+\infty} \phi(i, j) \\
    &= \sum_{k,\ell=-\infty}^{+\infty} \widehat{\phi} (k, \ell) \\
    &= 1 + \sum_{k,\ell \neq 0} e^{-\pi^2 b^2 L^2 (k^2 + \ell^2)}.
\end{align}
Provided that $b$ is large enough, the infinite sum is negligible (for $b = 1/L$, the largest term is of the order $10^{-4}$).
We thus have the desired result.

This in turn then means that
\begin{equation}
    (P y)^\transp (P x) = y^\transp P^\transp P x \approx y^\transp (x \star \phi) \approx y^\transp x.
\end{equation}
for two smooth images $x$ and $y$.

To prove stability, we note that since the Fourier transform of $\phi$ has its largest value close to one, the largest eigenvalue of $P^\transp P$ is also close to one, which means that $\|P\| \approx 1$.

We now evaluate this numerically in two experiments.
First, we take the clean 100 000 images in the validation dataset and decompose them in our proposed RBF-based polar representation.
We then do the same using an interpolation-based polar representation implemented using \texttt{scikit.transform.warp}, which has options for zeroth-, first-, second-, and third-order interpolation, corresponding to nearest-neigbor, linear, quadratic, and cubic interpolation, respectively.
Each original (Cartesian) image is then transformed by a small perturbation (rotation by ten degrees or shift by one pixel) and its polar representation is computed.
Finally, we compute the factor $\|Px - Py\| / \|x - y\|$, where $x$ is the original image, $y$ is the transformed image, and $P$ is the chosen polar mapping.
This should give us an indication of the norm of $\|P\|$ (indeed, each provides a lower bound) for these various mappings.

The results are as follows:
\begin{center}
    \begin{tabular}{l|ll}
        Transformation & Rotation & Shift \\
        \hline
        Polar RBF & 0.84 & 0.84 \\
        \texttt{warp}, nearest neighbor & 3.40 & 4.01 \\
        \texttt{warp}, linear & 2.97 & 3.48 \\
        \texttt{warp}, quadratic & 3.29 & 3.85 \\
        \texttt{warp}, cubic & 3.31 & 3.88
    \end{tabular}
\end{center}
We see, therefore, that not only does $\|P\| \approx 1 $ seem reasonable, but the alternative methods based on standard interpolation yield much higher norms, signifying lower robustness to perturbations.
This in turn results in worse training behavior when incorporated into a neural network structure.

\section{Computational complexity of polar mapping}
\label{sec:compComplexity}

In terms of computational cost, application of $P$ and $P^\transp$ can be implemented efficiently using fast Gaussian gridding~\citep{Greengard2004NUFFT, Barnett2019FINUFFT, Shih2021cuFINUFFT}.
As a result of this, the number of non-negligible terms in \eqref{eq:polar-decomp} and its adjoint is $\bigO(1)$, which means that both operations can be computed in $\bigO(L^2)$ time (recall that $N = \bigO(L)$ and $M = \bigO(L)$).
Finally, the deconvolution step, if implemented using FFTs, has a computational cost of $\bigO(L^2 \log L)$.
Mapping between Cartesian and polar domains can thus be achieved quite efficiently.

We now repeat the experiment of Appendix~\ref{sec:convoApprox} and time the round-trip pipeline (decomposing and reconstructing) for the 100 000 images in the validation set.
This gives the following results on an A100 GPU:
\begin{center}
    \begin{tabular}{llll}
        $N$ & $M$ & $b$ & Time per image \\
        \hline
        $L$ & $4L$ & $1/L$ & 40 \textmu{}s \\
        $L$ & $4L$ & $1/2L$ & 16 \textmu{}s \\
        $L$ & $4L$ & $2/L$ &  128 \textmu{}s \\

        $L$ & $3L$ & $1/L$ & 31 \textmu{}s \\
        $L$ & $2L$ & $1/L$ & 22 \textmu{}s \\
        $L$ & $3L/2$ & $1/L$ & 17 \textmu{}s \\

        $L/2$ & $4L$ & $1/L$ & 22 \textmu{}s \\
        $3L/4$ & $4L$ & $1/L$ & 31 \textmu{}s \\
        $3L/4$ & $3L/2$ & $1/L$ & 14 \textmu{}s
    \end{tabular}
\end{center}
Again, we see that the last configuration gives quite a good running time while also achieving low error (see Appendix~\ref{sec:convoApprox}).

\section{Schematics of networks}
\label{sec:architectureSchematics}

The architectures of the polar CNN and polar transformer are illustrated in Figures~\ref{fig:polarCNN} and \ref{fig:polarTransformer}, respectively.

\begin{figure*}
    \begin{center}
        \begin{tikzpicture}
            \newcommand{\myheight}{10mm}

            \node (cartesian-input) at (0, 2) {\includegraphics[height=\myheight]{sample_noisy_image}};

            \node (cartesian-input-label) [below=5mm of cartesian-input] {$L\times L$};

            \node (polar-input) at (2.5, 2) {\includegraphics[height=\myheight]{sample_noisy_polar_image}};
            \node (polar-input-label) [below=5mm of polar-input] {$N \times M$};

            \node (polar-layer-one-a) at (5.5, 1.8) {\includegraphics[height=\myheight]{sample_noisy_polar_image}};
            \node (polar-layer-one-b) at (5.6, 1.9) {\includegraphics[height=\myheight]{sample_noisy_polar_image}};
            \node (polar-layer-one-c) at (5.7, 2.0) {\includegraphics[height=\myheight]{sample_noisy_polar_image}};
            \node (polar-layer-one-d) at (5.8, 2.1) {\includegraphics[height=\myheight]{sample_noisy_polar_image}};
            \node (polar-layer-one-e) at (5.9, 2.2) {\includegraphics[height=\myheight]{sample_noisy_polar_image}};

            \node (polar-layer-one-c-label) [below=5mm of polar-layer-one-c] {$C \times N \times M$};

            \node at (7.5, 2) {$\cdots$};

            \node (polar-layer-two-a) at (9, 1.8) {\includegraphics[height=\myheight]{sample_polar_image}};
            \node (polar-layer-two-b) at (9.1, 1.9) {\includegraphics[height=\myheight]{sample_polar_image}};
            \node (polar-layer-two-c) at (9.2, 2.0) {\includegraphics[height=\myheight]{sample_polar_image}};
            \node (polar-layer-two-d) at (9.3, 2.1) {\includegraphics[height=\myheight]{sample_polar_image}};
            \node (polar-layer-two-e) at (9.4, 2.2) {\includegraphics[height=\myheight]{sample_polar_image}};
            \node (polar-layer-two-label) [below=5mm of polar-layer-two-c] {$C \times N \times M$};

            \node (polar-output) at (12.5, 2) {\includegraphics[height=\myheight]{sample_polar_image}};
            \node (polar-output-label) [below=5mm of polar-output] {$N \times M$};

            \node (cartesian-output) at (15, 2) {\includegraphics[height=\myheight]{sample_image}};
            \node (cartesian-output-label) [below=5mm of cartesian-output] {$L \times L$};

            \draw[->] (cartesian-input) -- (polar-input) node [midway, above] {$P$};
            \draw[->,shorten >=2mm] (polar-input) -- (polar-layer-one-c) node [midway, above, xshift=-1mm] {$\ostar\,h_1$};
            \draw[->,shorten <=2mm] (polar-layer-two-c) -- (polar-output) node [midway, above, xshift=1mm] {$\ostar h_n$};
            \draw[->] (polar-output) -- (cartesian-output) node [midway, above] {$P^{-1}$};
        \end{tikzpicture}
    \end{center}
    \caption{
        \label{fig:polarCNN}
    Outline of the polar CNN architecture.
    The Cartesian image is first transformed into the polar representation using the mapping $P$.
    It then passes through a series of 1D convolutional layers defined by the angular filters $h_1, h_2, \ldots, h_n$, each of which inclues a ReLU activation and a group normalization step (not shown above).
    Finally, the result is converted back to a Cartesian image using $P^{-1}$.}
\end{figure*}

\begin{figure*}
    \begin{center}
        \begin{tikzpicture}
            [
                every text node part/.style={align=center},
            ]
            \newcommand{\myheight}{10mm}

            \node (input-0) [inner sep=0mm] at (0, 0) {\includegraphics[height=\myheight]{alignment/noisy_0000}};
            \node (input-1) [below=5mm of input-0, inner sep=0mm] {\includegraphics[height=\myheight]{alignment/noisy_0001}};
            \node (input-2) [below=5mm of input-1, inner sep=0mm] {\includegraphics[height=\myheight]{alignment/noisy_0002}};
            \node (input-cont) [below=5mm of input-2] {$\vdots$};
            \node (input-last) [below=5mm of input-cont, inner sep=0mm] {\includegraphics[height=\myheight]{alignment/noisy_0004}};

            \node (preproc-0-a)    [right=20mm of input-0, xshift=-1mm, yshift=-1mm] {\includegraphics[height=\myheight]{sample_noisy_polar_image}};
            \node (preproc-0-b)    [right=20mm of input-0, xshift=0mm, yshift=0mm] {\includegraphics[height=\myheight]{sample_noisy_polar_image}};
            \node (preproc-0-c)    [right=20mm of input-0, xshift=1mm, yshift=1mm] {\includegraphics[height=\myheight]{sample_noisy_polar_image}};

            \node (preproc-1-a)    [right=20mm of input-1, xshift=-1mm, yshift=-1mm] {\includegraphics[height=\myheight]{sample_noisy_polar_image}};
            \node (preproc-1-b)    [right=20mm of input-1, xshift=0mm, yshift=0mm] {\includegraphics[height=\myheight]{sample_noisy_polar_image}};
            \node (preproc-1-c)    [right=20mm of input-1, xshift=1mm, yshift=1mm] {\includegraphics[height=\myheight]{sample_noisy_polar_image}};

            \node (preproc-2-a)    [right=20mm of input-2, xshift=-1mm, yshift=-1mm] {\includegraphics[height=\myheight]{sample_noisy_polar_image}};
            \node (preproc-2-b)    [right=20mm of input-2, xshift=0mm, yshift=0mm] {\includegraphics[height=\myheight]{sample_noisy_polar_image}};
            \node (preproc-2-c)    [right=20mm of input-2, xshift=1mm, yshift=1mm] {\includegraphics[height=\myheight]{sample_noisy_polar_image}};

            \node (preproc-last-a) [right=20mm of input-last, xshift=-1mm, yshift=-1mm] {\includegraphics[height=\myheight]{sample_noisy_polar_image}};
            \node (preproc-last-b) [right=20mm of input-last, xshift=0mm, yshift=0mm] {\includegraphics[height=\myheight]{sample_noisy_polar_image}};
            \node (preproc-last-c) [right=20mm of input-last, xshift=1mm, yshift=1mm] {\includegraphics[height=\myheight]{sample_noisy_polar_image}};

            \node (attention) [draw, right=6mm of preproc-2-b, minimum height=80mm, minimum width=30mm] {Angular attention};

            \node (postproc-0-a)    [right=40mm of preproc-0-b, xshift=-1mm, yshift=-1mm] {\includegraphics[height=\myheight]{sample_polar_image}};
            \node (postproc-0-b)    [right=40mm of preproc-0-b, xshift=0mm, yshift=0mm] {\includegraphics[height=\myheight]{sample_polar_image}};
            \node (postproc-0-c)    [right=40mm of preproc-0-b, xshift=1mm, yshift=1mm] {\includegraphics[height=\myheight]{sample_polar_image}};

            \node (postproc-1-a)    [right=40mm of preproc-1-b, xshift=-1mm, yshift=-1mm] {\includegraphics[height=\myheight]{sample_polar_image}};
            \node (postproc-1-b)    [right=40mm of preproc-1-b, xshift=0mm, yshift=0mm] {\includegraphics[height=\myheight]{sample_polar_image}};
            \node (postproc-1-c)    [right=40mm of preproc-1-b, xshift=1mm, yshift=1mm] {\includegraphics[height=\myheight]{sample_polar_image}};

            \node (postproc-2-a)    [right=40mm of preproc-2-b, xshift=-1mm, yshift=-1mm] {\includegraphics[height=\myheight]{sample_polar_image}};
            \node (postproc-2-b)    [right=40mm of preproc-2-b, xshift=0mm, yshift=0mm] {\includegraphics[height=\myheight]{sample_polar_image}};
            \node (postproc-2-c)    [right=40mm of preproc-2-b, xshift=1mm, yshift=1mm] {\includegraphics[height=\myheight]{sample_polar_image}};

            \node (postproc-last-a) [right=40mm of preproc-last-b, xshift=-1mm, yshift=-1mm] {\includegraphics[height=\myheight]{sample_polar_image}};
            \node (postproc-last-b) [right=40mm of preproc-last-b, xshift=0mm, yshift=0mm] {\includegraphics[height=\myheight]{sample_polar_image}};
            \node (postproc-last-c) [right=40mm of preproc-last-b, xshift=1mm, yshift=1mm] {\includegraphics[height=\myheight]{sample_polar_image}};

            \node (output-0) [right=20mm of postproc-0-b, inner sep=0mm] {\includegraphics[height=\myheight]{alignment/clean_0000}};
            \node (output-1) [below=5mm of output-0, inner sep=0mm] {\includegraphics[height=\myheight]{alignment/clean_0001}};
            \node (output-2) [below=5mm of output-1, inner sep=0mm] {\includegraphics[height=\myheight]{alignment/clean_0002}};
            \node (output-cont) [below=5mm of output-2] {$\vdots$};
            \node (output-last) [below=5mm of output-cont, inner sep=0mm] {\includegraphics[height=\myheight]{alignment/clean_0004}};

            \draw [->,shorten <=1mm, shorten >=1mm] (input-0) -- (preproc-0-b) node [midway, above] {Pol. CNN};
            \draw [->,shorten <=1mm, shorten >=1mm] (input-1) -- (preproc-1-b) node [midway, above] {Pol. CNN};
            \draw [->,shorten <=1mm, shorten >=1mm] (input-2) -- (preproc-2-b) node [midway, above] {Pol. CNN};
            \draw [->,shorten <=1mm, shorten >=1mm] (input-last) -- (preproc-last-b) node [midway, above] {Pol. CNN};

            \draw [->,shorten <=1mm] (preproc-0-b.east) -- ++(4mm, 0);
            \draw [->,shorten <=1mm] (preproc-1-b.east) -- ++(4mm, 0);
            \draw [->,shorten <=1mm] (preproc-2-b.east) -- ++(4mm, 0);
            \draw [->,shorten <=1mm] (preproc-last-b.east) -- ++(4mm, 0);

            \draw [<-,shorten >=1mm] (postproc-0-b.west) -- ++(-4mm, 0);
            \draw [<-,shorten >=1mm] (postproc-1-b.west) -- ++(-4mm, 0);
            \draw [<-,shorten >=1mm] (postproc-2-b.west) -- ++(-4mm, 0);
            \draw [<-,shorten >=1mm] (postproc-last-b.west) -- ++(-4mm, 0);

            \draw [->,shorten <=1mm, shorten >=1mm] (postproc-0-b.east) -- (output-0.west) node [midway, above] {Pol. CNN};
            \draw [->,shorten <=1mm, shorten >=1mm] (postproc-1-b.east) -- (output-1.west) node [midway, above] {Pol. CNN};
            \draw [->,shorten <=1mm, shorten >=1mm] (postproc-2-b.east) -- (output-2.west) node [midway, above] {Pol. CNN};
            \draw [->,shorten <=1mm, shorten >=1mm] (postproc-last-b.east) -- (output-last.west) node [midway, above] {Pol. CNN};
        \end{tikzpicture}
    \end{center}
    \caption{\label{fig:polarTransformer} Outline of the polar transformer architecture.
    The Cartesian images are first preprocessed by a polar CNN (after being converted to the polar representation), then fed into the angular attention mechanism.
    This then combines the information from the various images by computing the key, and query vectors for each image (using another set of polar CNNs), calculating the attention coefficients, and using these together with the value vectors to compute the output.
    The output is then processed by another polar CNN and finally converted to a Cartesian image.
    Note that the angular attention block can be repeated, but we have found that one such block suffices for the purposes of denoising.}
\end{figure*}

\section{Angular attention}
\label{sec:angularAttention}

We can examine the behavior of the angular attention mechanism by studying the attention coefficients for a pair of projection images.
One image is the rotated version of another and both are subjected to varying levels of noise.
By running these through the polar transformer, we can record the attention coefficients and plot them according to the angular index $\ell$.
The results are shown in Figure~\ref{fig:angularAttention}.

At high SNR, the attention coefficients are concentrated around the true rotation angle, showing that the angular attention mechanism successfully aligns the two images with respct to one another.
As the noise increases, however, the attention coefficients are dispersed over a wider range of angles, reflecting the increased uncertainty about the angular registration.
Importantly, however, this is a robust degradation in the sense that the attention coefficients of high magnitude are still located around the true angle.

\begin{figure*}
\begin{subfigure}{0.40\textwidth}
\begin{center}
\subimport{figures/angularAttention}{noisy_snr0_25_ss8_angular_attention_gplt}
\caption{$\text{SNR}= 0.25$}
\end{center}
\end{subfigure}%
\begin{subfigure}{0.30\textwidth}
\subimport{figures/angularAttention}{noisy_snr0_0625_ss8_angular_attention_gplt}
\caption{$\text{SNR} = 0.06$}
\end{subfigure}%
\begin{subfigure}{0.30\textwidth}
\subimport{figures/angularAttention}{noisy_snr0_015625_ss8_angular_attention_gplt}
\caption{$\text{SNR}= 0.02$}
\end{subfigure}
    \caption{\label{fig:angularAttention} The angular attention coefficients for a pair of images at different noise levels are shown. For high SNR (a), we see high concentration around the true angle, while at lower SNR (b,c), there is more uncertainty regarding the correct angular assignment.}
\end{figure*}

\section{Attention for Clustering}
\label{sec:clusteringAttention}
The polar transformer architecture is designed to combine information from precisely those image in an input set which match up to plane rotation.
It is the process of finding out which ones these are that we call clustering.

How it occurs can be observed by studying the activations in the activation coefficient matrix as a set with known split of two directions is processed.
Without loss of generality (because the attention mechanism is equivariant under permutations of the images), this can be realized by taking the first 8 projections in one directions and the remaining 8 projections in another direction.
Ideally, the attention (summed over the angular direction) should then have block-matrix structure, as only the images corresponding to the same direction interact with each other.
Indeed this matches the observations in the case of relatively high SNR (see Figure~\ref{fig:attentionMatrixExamples}).
Consequently, the model can then also combine information for denoising purposes just as well as if the directions had already been known a priori, as is the case of a directional set.
This manifests in MSE scores that are almost as good with two directions as with a single direction (see Table~\ref{table:results}).

At lower SNR meanwhile, the attention mechanism cannot be as confident in classifying the directions anymore and the block structure of the matrix is more fuzzy.
This manifests in MSE values that are not quite as good anymore as in the ideal predetermined case, but there is still more information to be exploited even from the uncertain classification as from only single images and thus even the clustering model still denoises better at low SNR than single-image models can.

\section{Poisson noise parametrization}
\label{sec:poissonNoiseDetails}
For Gaussian noise, we used white Gaussian noise at a fixed noise level that was added to the images.
Poisson noise meanwhile is a quantization process, which does not directly match the notion of separate signal and noise level.
It can however be unified by introducing a \emph{Poissonicity} parameter $\eta$ which controlled the extent to which it approached a Gaussian noise (which occurred as $\eta \rightarrow 0$).

The (discrete) Poisson distribution has one parameter \(\lambda\) and a probability mass function of
\newcommand{\Pois}{\mathrm{Pois}}
\begin{equation}
  \operatorname{PMF}_{\Pois(\lambda)}(k) = \lambda^k \frac{e^{-\lambda}}{k!}
\end{equation}
The mean of this distribution is again \(\lambda\), and the standard deviation \(\sqrt{\lambda}\).

To use the Poisson distribution directly as a source of shot noise would require comparing concrete finite electron counts, which is inconvenient for comparison with the Gaussian setting as that corresponds to infinite electron rate.
Also, the Poisson expectation is always positive, whereas we use by convention noise that is symmetric about zero in the regions where the electron beam is unobstructed by molecules (the background).

We therefore define a slightly different process to emulate shot noise, which has initially two parameters \(\eta\) and \(\lambda_0\).
Given an input signal \(x\) which is close to zero in the background, let
\begin{equation}\label{eq:PoissonRescaling}
  y = \eta\cdot(\lambda_0 - c),
\end{equation}
with
\begin{equation}\label{eq:PoissonDrawing}
  c \leftarrow \Pois(\lambda_0 - x/\eta).
\end{equation}
The subtraction \(\lambda_0 - x/\eta\) expresses that positive \(x\) attenuates the pixel-wise dose, corresponding to blocking of the electrons by the molecule (more accurately, destructive interference due to phase shift).
Note that the opposite behavior can be achieved by choosing negative \(\eta\).

Then the noisy image has in the background again mean 0, matching the mean in the Gaussian case, and standard deviation
\begin{equation}\label{eq:PoissonNoiseLevel}
  N = \sqrt{\lambda_0} \cdot \eta.
\end{equation}
The local perturbation of the mean due to \(x\) has unity gain (because the multiplication by \(\eta\) in \eqref{eq:PoissonRescaling} is canceled by division in \eqref{eq:PoissonDrawing}).

We want \(N\) to match the noise's standard deviation in the Gaussian setting, which is by convention represented as
\newcommand{\SNR}{\mathrm{SNR}}
\begin{equation}\label{eq:PoissonSNRNotionConvention}
  N = \frac{S}{\sqrt\SNR}.
\end{equation}
This can be fulfilled by solving \eqref{eq:PoissonNoiseLevel} for \(\lambda_0\), which gives
\begin{equation}
  \lambda_0 = \frac{S^2}{\SNR\cdot\eta^2}.
\end{equation}
\(\eta\) remains as a parameter. If \(\eta\approx0\) is chosen, \(\lambda_0\) approaches infinity. A Poisson distribution with large \(\lambda\) is well approximated by a Gaussian distribution, in that sense the results can directly be compared.


\begin{table}
\newcommand{\exmplDenoisProj}[3]{\includegraphics[width=12mm]{figures/denoiserExamples/snr#1/#2_#3.png}}
 \newcommand{\poissonsnrexample}[2]{\exmplDenoisProj{0.03125_poissonic0.5}{#1}{#2}}
 \begin{tabular}{m{1.5cm}m{1.5cm}|m{1.5cm}m{1.5cm}}
  & & Denoiser:
   \\ (Clean) & (Poisson) & Gaussian polar tfmr. (dir.) & Poisson polar tfmr. (dir.)
   \\ \poissonsnrexample{clean}{0006}
      & \poissonsnrexample{noisy}{0006}
      & \poissonsnrexample{ss8_grpNrm4}{0006}
      & \poissonsnrexample{ss8_grpNrm4_poissonic0.5}{0006}
   \\ \poissonsnrexample{clean}{0015}
      & \poissonsnrexample{noisy}{0015}
      & \poissonsnrexample{ss8_grpNrm4}{0015}
      & \poissonsnrexample{ss8_grpNrm4_poissonic0.5}{0015}
  \\ & & \attentionGNMeanMSE{0.03125_poissonic0.5}
             & \attentionGNPoissMeanMSE{0.03125_poissonic0.5}
  \\ & & Mean MSE
 \end{tabular}
\caption{Example results like in Figure~\ref{fig:denoisedExamples}, but for Poisson noise. SNR of $0.03$ by the notion of \autoref{eq:PoissonSNRNotionConvention}.
  A transformer model that has been trained on Gaussian noise is compared with one trained on Poisson noise.}
  \label{tab:poissondenoisedExamples}
\end{table}

\begin{figure*}[t]
 \begin{center}
    \begin{tikzpicture}
     \node (att125) at (0, 2.6) {\includegraphics[width=40mm]{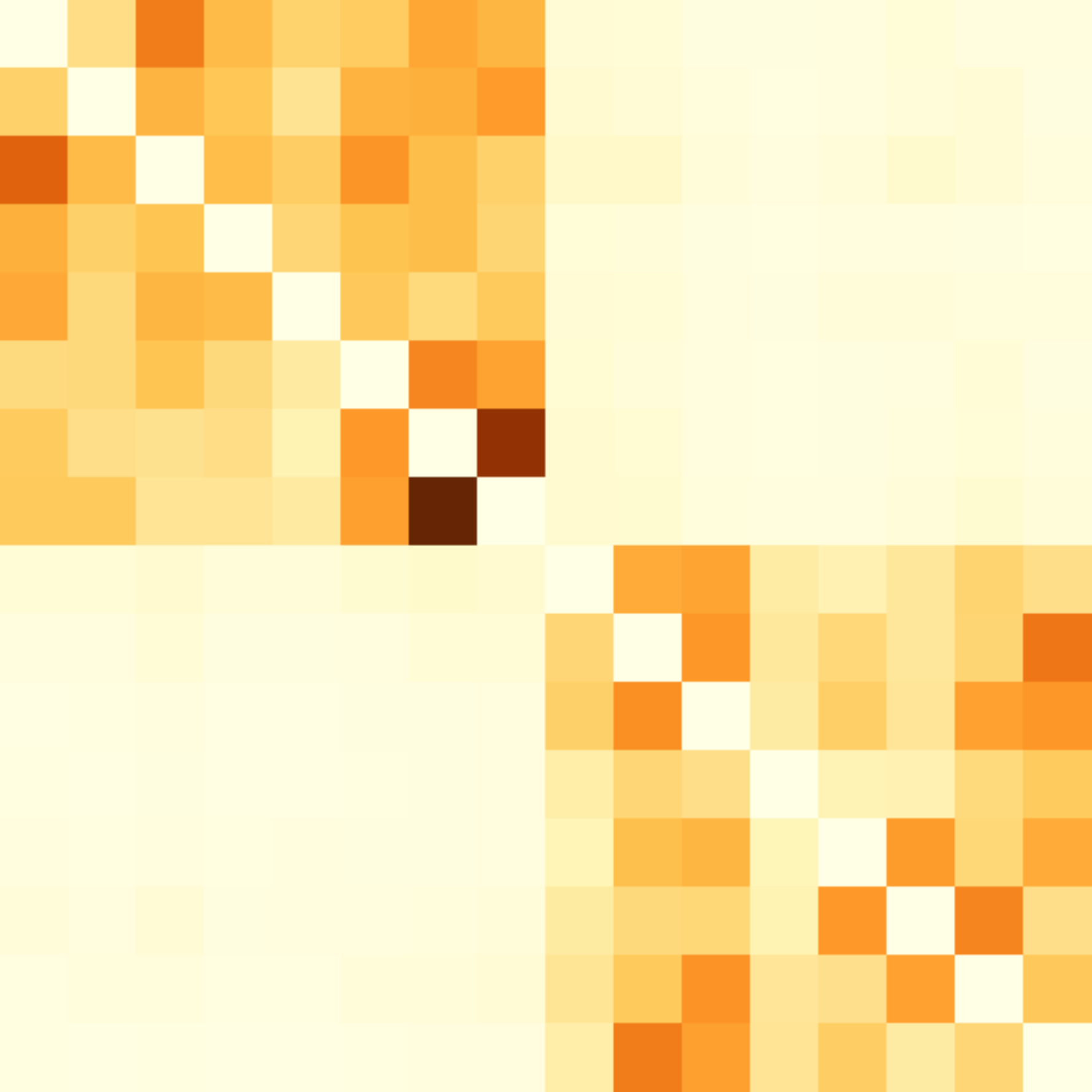}};
     \node (snr125) at (0, 0) {\(\tfrac18\)};
     \node (att03125) at (5, 2.6) {\includegraphics[width=40mm]{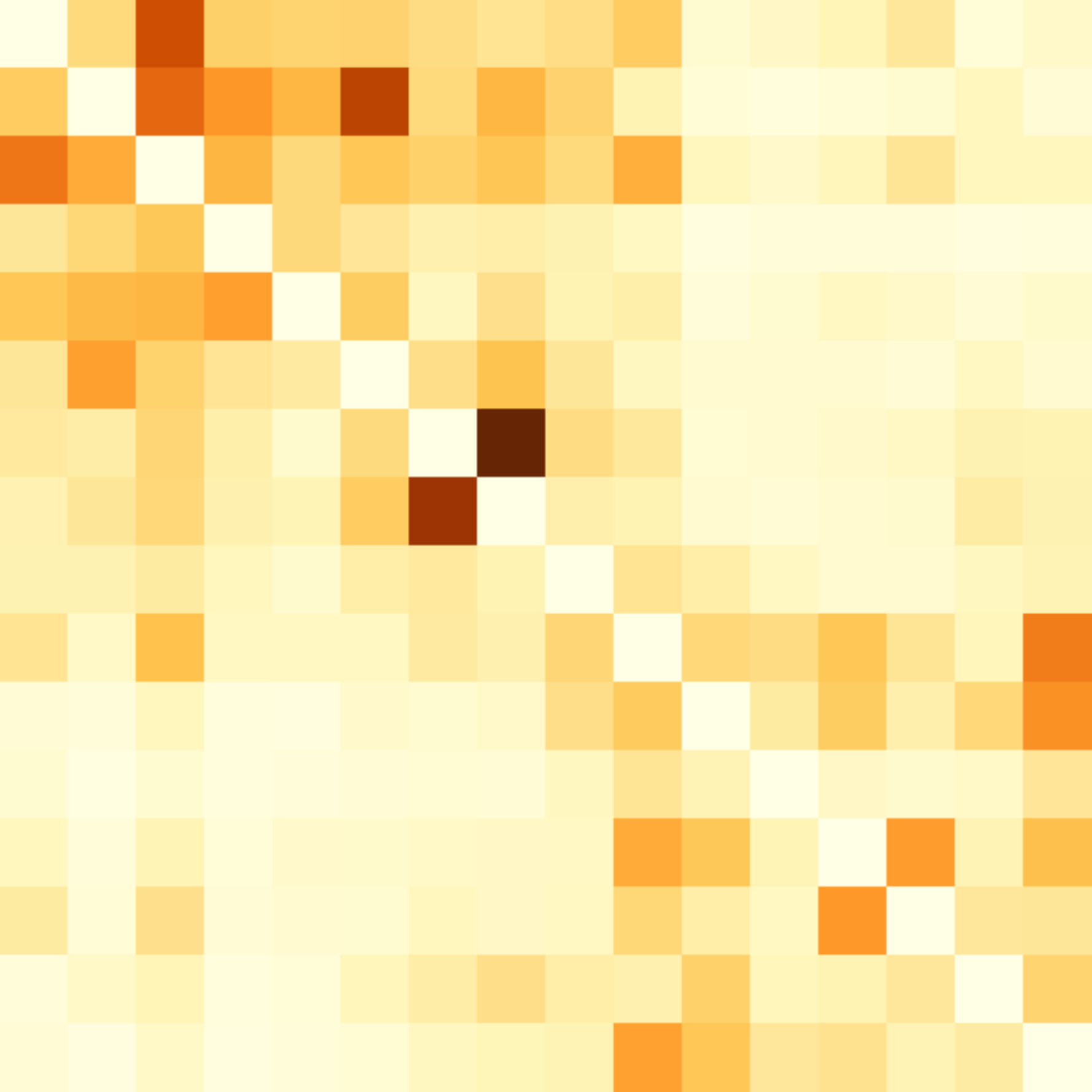}};
     \node (snr03125) at (5, 0) {\(\tfrac1{32}\)};
     \node (att0078125) at (10, 2.6) {\includegraphics[width=40mm]{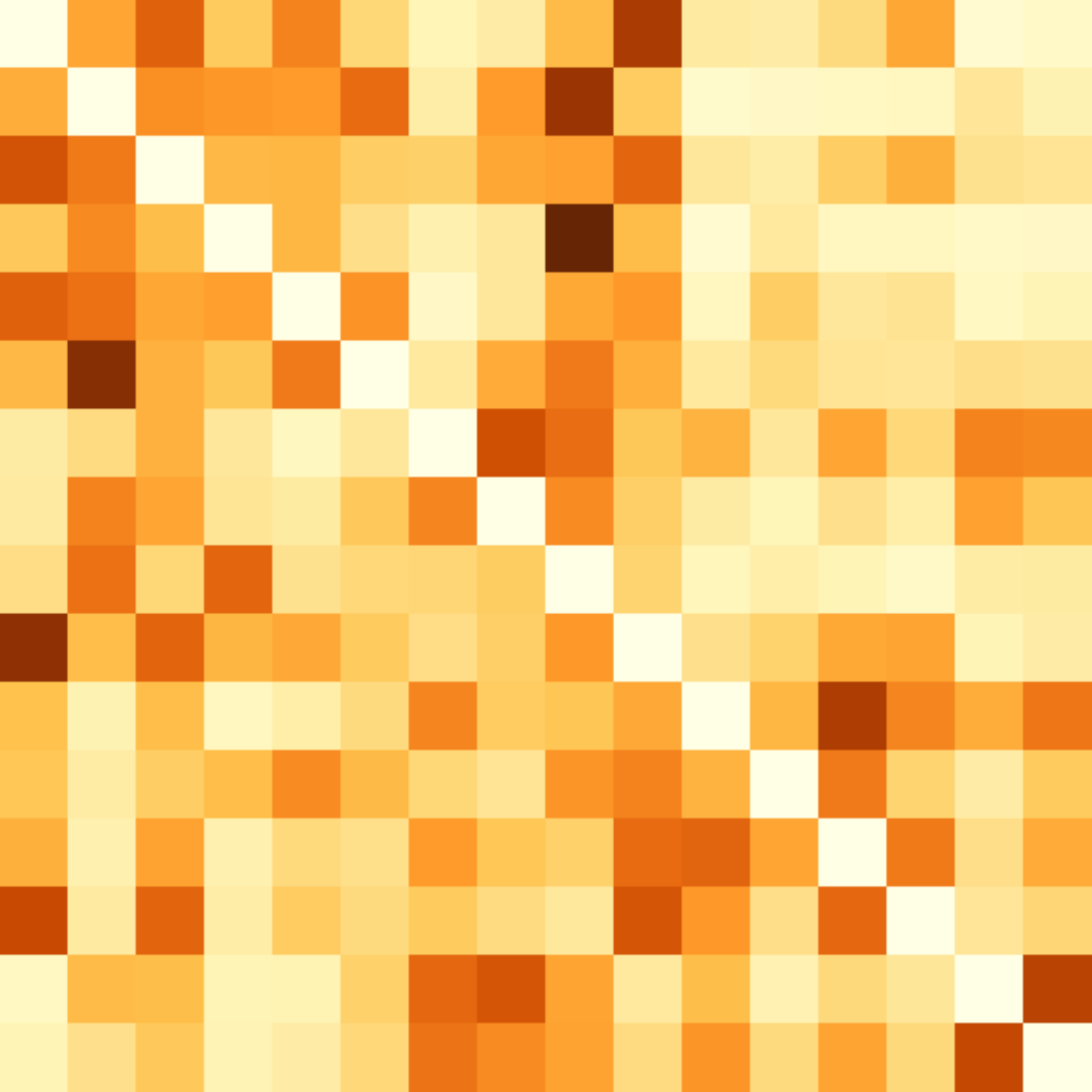}};
     \node (snr0078125) at (10, 0) {\(\tfrac1{128}\)};
    \end{tikzpicture}
 \end{center}
 \caption{\label{fig:attentionMatrixExamples} Example attention matrices for a 2-direction, 16-image-set polar transformer model at different SNR levels.}
\end{figure*}

\section{Additional results}
\label{sec:moreResults}

To further investigate the performance of the proposed polar transformer architecture, we consider some additional experiments.

\begin{table*}
    \begin{center}
        \begin{tabular}{l|cccc}
            Method & Relative MSE (\textdownarrow) & PSNR (\textuparrow) & SSIM (\textuparrow) & FRC resolution (\textdownarrow) \\
            \hline
            DnCNN & 0.088 & 29.2 & 0.95 & 3.18 \\
            U-Net & 0.089 & 29.2 & 0.95 & 3.17 \\
            Polar CNN & 0.081 & 29.6 & 0.95 & 3.19 \\
            \hline
            Polar transformer (dir.) & 0.042 & 32.2 & 0.97 & 2.68 \\
            Polar transformer (gen.) & 0.049 & 31.6 & 0.97 & 2.81 \\
        \end{tabular}
    \end{center}
    \caption{\label{table:detailedResults} More detailed denoising results for $\text{SNR} = 0.02$ with CTF and shifts. PSNR and SSIM follow the usual definitions while the FRC resolution measures the size of the smallest detail recovered in the image.}
\end{table*}

\paragraph{Other metrics}
Table~\ref{table:detailedResults} shows additional metrics for the models considered in Section~\ref{sec:results}.
In particular, we calculate the peak SNR~(PSNR) and structural similarity index measure~(SSIM) the standard formulas (for SSIM, we used the implementation of \texttt{skimage.metrics}).
These show a similar behavior as that of the relative MSE, with the transformer architectures providing a significant improvement in image quality by incorporating information from multiple images.

In addition to these standard measures, we have also calculated the Fourier ring correlation~(FRC) between the denoised images and the ground truth~\cite{VanHeel1987Similarity}.
This compares the Fourier spectrum of the two images and calculates the correlation coefficient along various rings, showing how the spectral content of the image is recovered.
These FRC plots are shown in Figure~\ref{fig:standardFrcComparison}.

To summarize the FRCs, a simple resolution measure can be calculated by taking the intersection of the FRC curve with the value $0.5$.
Inverting the frequency value to get a size measurement then gives the FRC resolution, which is specified in the last column of Table~\ref{table:detailedResults}.
For both the FRC curves measures and the resolution measure, we again see a significant improvement in the transformer architectures compared to the single-image approaches.

\begin{figure*}
    \graphicspath{{figures/frcComparison}}
    \begin{subfigure}{0.33\textwidth}
        \input{figures/frcComparison/standard_comparison_gplt.tex}
        \caption{\label{fig:standardFrcComparison}}
    \end{subfigure}
    \begin{subfigure}{0.33\textwidth}
        \input{figures/frcComparison/directions_comparison_gplt.tex}
        \caption{\label{fig:directionFrcComparison}}
    \end{subfigure}
    \begin{subfigure}{0.33\textwidth}
        \input{figures/frcComparison/ablation_comparison_gplt.tex}
        \caption{\label{fig:ablationFrcComparison}}
    \end{subfigure}
    \caption{Comparison of Fourier ring correlation~(FRC) for different models. (a) Comparison between the standard models considered in Section~\ref{sec:results}. (b) Comparison between different numbers of viewing directions (see Table~\ref{table:moreClasses}). (c) Ablation study for the polar CNN and angular attention components (see Table~\ref{table:ablation}).}
\end{figure*}

\begin{figure*}
    \graphicspath{{figures/snrDropComparison}}
    \begin{center}
        \input{figures/snrDropComparison/attention_ss8_grpNrm4_errors_gplt}
    \end{center}
    \caption{\label{fig:mseComparison} The denoised relative MSE for various SNRs with Gaussian noise (but without CTF and shifts).
    The Wiener filter, DnCNN, U-Net, and polar CNN models denoise individual images, while the transformers denoise sets of images (the directional set contains eight images while the general set contains sixteen images split evenly over two viewing directions).}
\end{figure*}

\paragraph{SNR dependency}
To study the dependence of relative MSE for different SNRs, we train each model for noisy data at the pre-specified SNR and evaluate it on the training set.
For simplicity, these are performed without CTF or shifts.
The results are shown in Figure~\ref{fig:mseComparison}.
Here we have also included results for the covariance Wiener filter~\cite{Bhamre2016Denois} as implemented in ASPIRE~\cite{Aspire2025}.

We again see the improved performance of the polar transformer, but note that the relative improvement is greatest at high SNR.
The likely cause is that alignment becomes easier in this regime, allowing for better integration of information between images.
The general-set transformer also suffers more at low SNR, where the clustering step becomes more difficult and we see a larger gap in performance between it and the directional-set transformer.

Finally, we note that the performance of the single-image denoisers converge to that of the Wiener filter as the SNR drops.
The advantage of the non-linear properties of the neural network architectures is lost here and they essentially approximate the linear denoiser.

\begin{table*}
    \begin{center}
        \begin{tabular}{c|cccc}
            No. directions & Relative MSE (\textdownarrow) & PSNR (\textuparrow) & SSIM (\textuparrow) & FRC resolution (\textdownarrow) \\
            \hline
            1 & 0.042 & 32.2 & 0.97 & 2.68 \\
            2 & 0.049 & 31.6 & 0.97 & 2.81 \\
            4 & 0.051 & 31.5 & 0.96 & 2.92 \\
            8 & 0.055 & 31.2 & 0.96 & 2.93
        \end{tabular}
    \end{center}
    \caption{\label{table:moreClasses} Denoising results for the polar transformer in the general set case with varying number of viewing directions. In each case, there were eight images in each class.}
\end{table*}

\begin{table*}
    \begin{center}
        \begin{tabular}{c|cccc}
            Variant & Relative MSE (\textdownarrow) & PSNR (\textuparrow) & SSIM (\textuparrow) & FRC resolution (\textdownarrow) \\
            \hline
            Polar CNN + angular attention & 0.042 & 32.2 & 0.97 & 2.68 \\
            Polar CNN + standard attention & 0.068 & 30.2 & 0.96 & 3.32 \\
            Cartesian CNN + angular attention & 0.062 & 30.6 & 0.96 & 3.11 \\
            Cartesian CNN + standard attention & 0.093 & 28.8 & 0.94 & 3.51
        \end{tabular}
    \end{center}
    \caption{\label{table:ablation} Denoising results for ablated versions of the polar transformer architecture, swapping out both the processing layers and attention for ones that do not have the rotational equivariance.}
\end{table*}

\paragraph{Scaling the transformer}
In Section~\ref{sec:results}, we considered two regimes for the polar transformer: the directional set (where all images came from the same viewing direction, but with different in-plane rotations) and the general set (where the images came from two different viewing directions).
While for the directional set, the transformer simply had to align the images, for the general set, it also had to cluster the images (into one of two viewing directions).

A natural question is therefore how this generalizes to more viewing directions.
We have conducted additional experiments to investigate this, varying the number of viewing directions between one and eight (other conditions were kept the same, such as $\text{SNR} = 0.02$ and application of CTFs and shifts).
In each case, there were eight images for each viewing direction, so if the clustering is perfect, the denoised images should be of the same quality.
The results are shown in Table~\ref{table:moreClasses} (metrics) and Figure~\ref{fig:directionFrcComparison} (FRC curves).
We see that although there is some degradation in quality with more viewing directions, it is quite slight, suggesting that the polar transformer is able to distinguish between many viewing directions quite well.

\paragraph{Ablation study}
Another part of evaluating the proposed architecture is to replace parts with standard components and observe the difference in performance.
Here, we focus on three variants of the polar transformer architecture, applied to the directional-set problem with $\text{SNR} = 0.02$, including CTFs and shifts as in Section~\ref{sec:results}.
The first keeps the polar CNN components, but replaces angular attention with tthe standard attention mechanism, treating each image as a token to be combined with the others.
The second keeps the angular attention mechanism, but replaces the polar CNNs with Cartesian CNNs with the same configuration (in terms of channels).
Finally, the third replaces both the polar CNNs with Cartesian CNNs and the angular attention mechanism with standard attention.

The results are given in Table~\ref{table:ablation} (metrics) and Figure~\ref{fig:ablationFrcComparison}.
Here we see that each of these components (the polar CNN and the angular attention mechanism) is crucial to the superior performance of the polar transformer.
Note, however, how the angular attention mechanism seems to play a more important role since the Cartesian CNN + angular attention variant performs the best out of the three ablation variants.
This is natural as the Cartesian CNN is able to partially learn the rotational equivariance that is enforced by the architecture of the polar CNN.
In the above experiment, however, it is not enough to achieve a comparable performance.

\paragraph{Training on Parakeet data}
Although, as shown in \autoref{fig:denoisedExamples}, the models that were trained on simple synthetic projections already generalize to some extent to other data (like ones generated by Parakeet~\cite{Parkhurst2021Parakeet}), this has its limits.
Real-world micrographs usually have additional disturbing features (e.g. ice inhomogeneities) which, while less disruptive than the shot noise, can throw off a model that has not encountered such effects in training.

To demonstrate that this is a limitation of the training process rather than our model architecture, we have also trained models on projections generated with Parakeet at more aggressive / realistic settings.
\begin{figure*}
\begin{center}
 \includegraphics[width=\textwidth]{figures/parakeet-trained-examples.png}
\end{center}
    \caption{Qualitative results of denoising more challenging projections generated by Parakeet with a model that was also trained on Parakeet data. Top row: Parakeet simulations of images at electron dose \(100/\Angstr\) and defocus \(c_{10} = -5000\). Bottom row: simulation of the same pose, but with electron dose \(3\times10^9/\Angstr\) and no defocus. Middle row: estimates of the latter, using a polar transformer model with sets consisting of the top row projection and 7 similar projections each as input.}
 \label{fig:parakeetTrainedExamples}
\end{figure*}
On these projections, the previously discussed models would not perform well, but dedicated training makes it possible.
In \autoref{fig:parakeetTrainedExamples}, some examples are shown.
Training on such data brings our models close to being usable on real-world data.

\addtocounter{section}{1}
\section{More 3D reconstruction results}
\label{sec:reconstrReliability}
In Section~\ref{sec:reconstruction} it was demonstrated that the improved denoising our method provides can also lead to better 3D reconstructions.
Figure~\ref{fig:reconstrFSCproblematics} shows the FSCs for reconstructions of two other molecules: CENP-E (PDB ID 1T5C) and a MoaD protein (PDB ID 1V8C).

\begin{figure*}
\begin{center}
\begin{subfigure}[c]{0.65\textwidth}
 \includegraphics[width=0.32\textwidth]{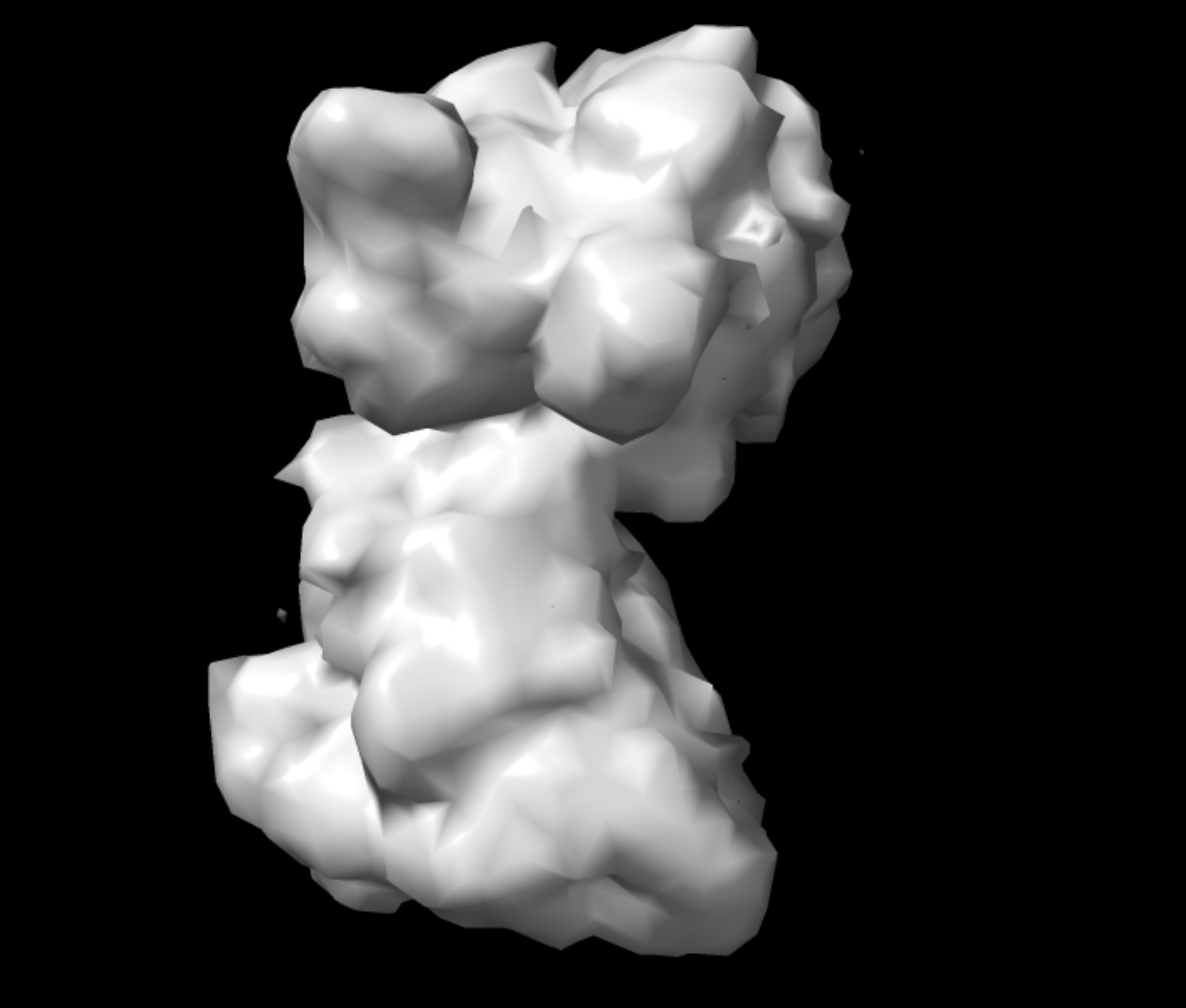}
 \includegraphics[width=0.32\textwidth]{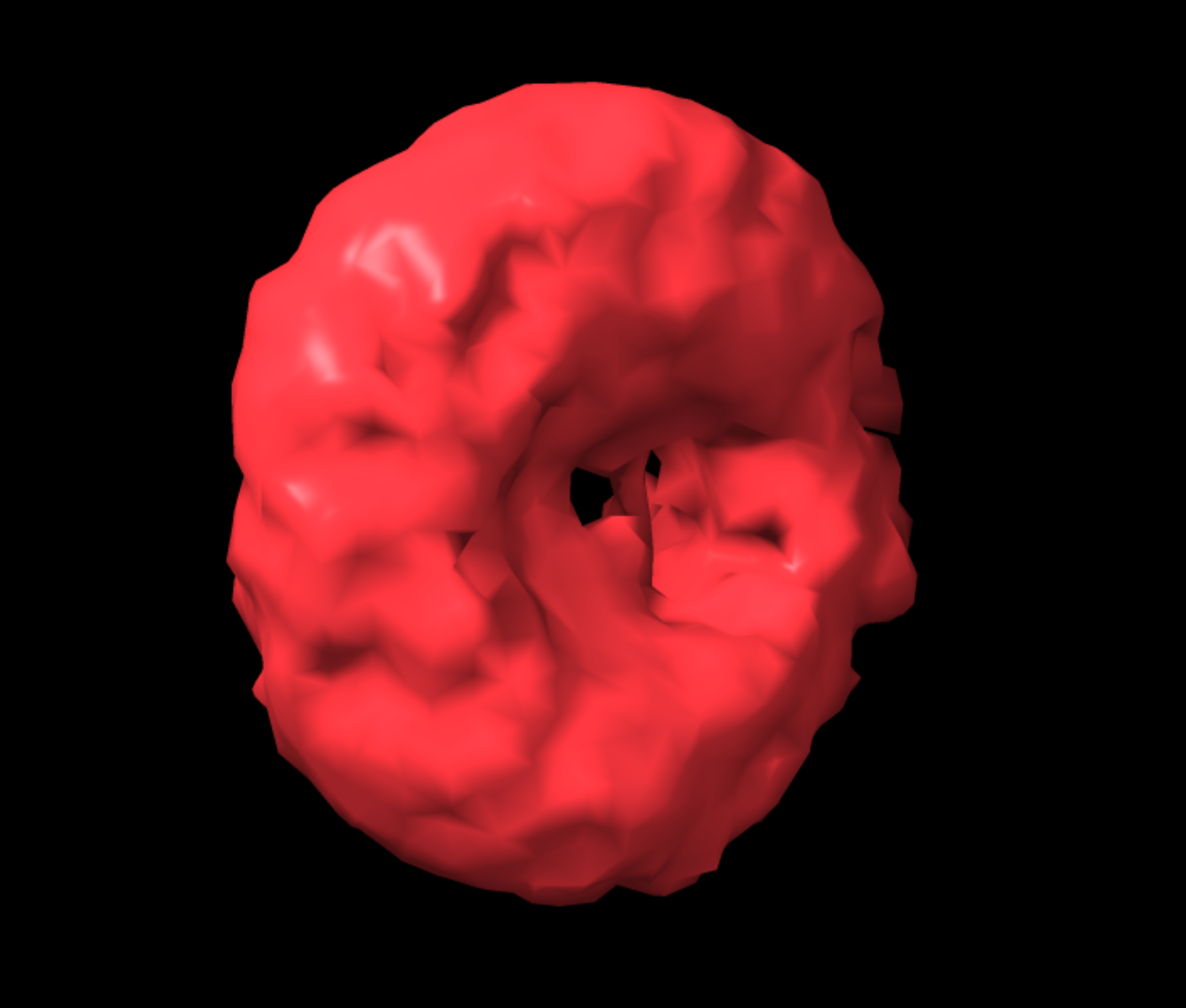}
 \includegraphics[width=0.32\textwidth]{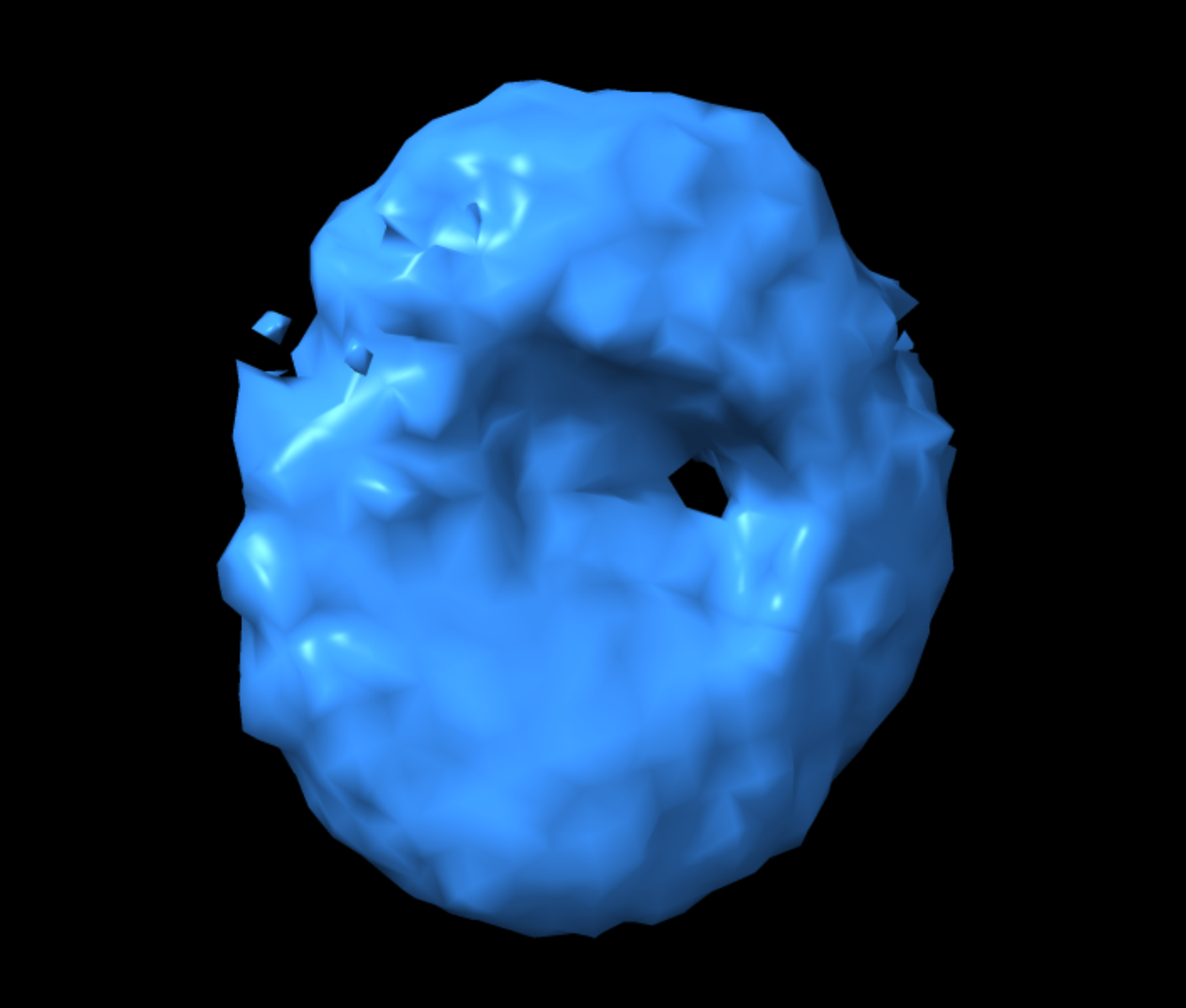}
\end{subfigure}
\begin{subfigure}[c]{0.33\textwidth}
 \includegraphics[width=\textwidth]{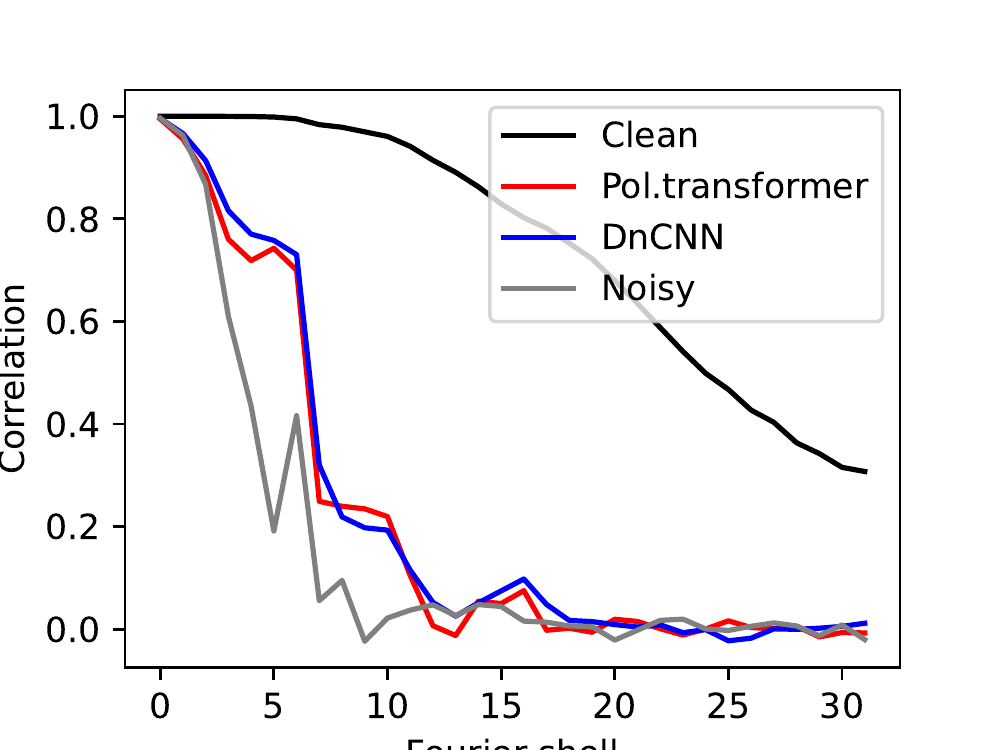}
\end{subfigure}
\begin{subfigure}[c]{0.65\textwidth}
 \includegraphics[width=0.32\textwidth]{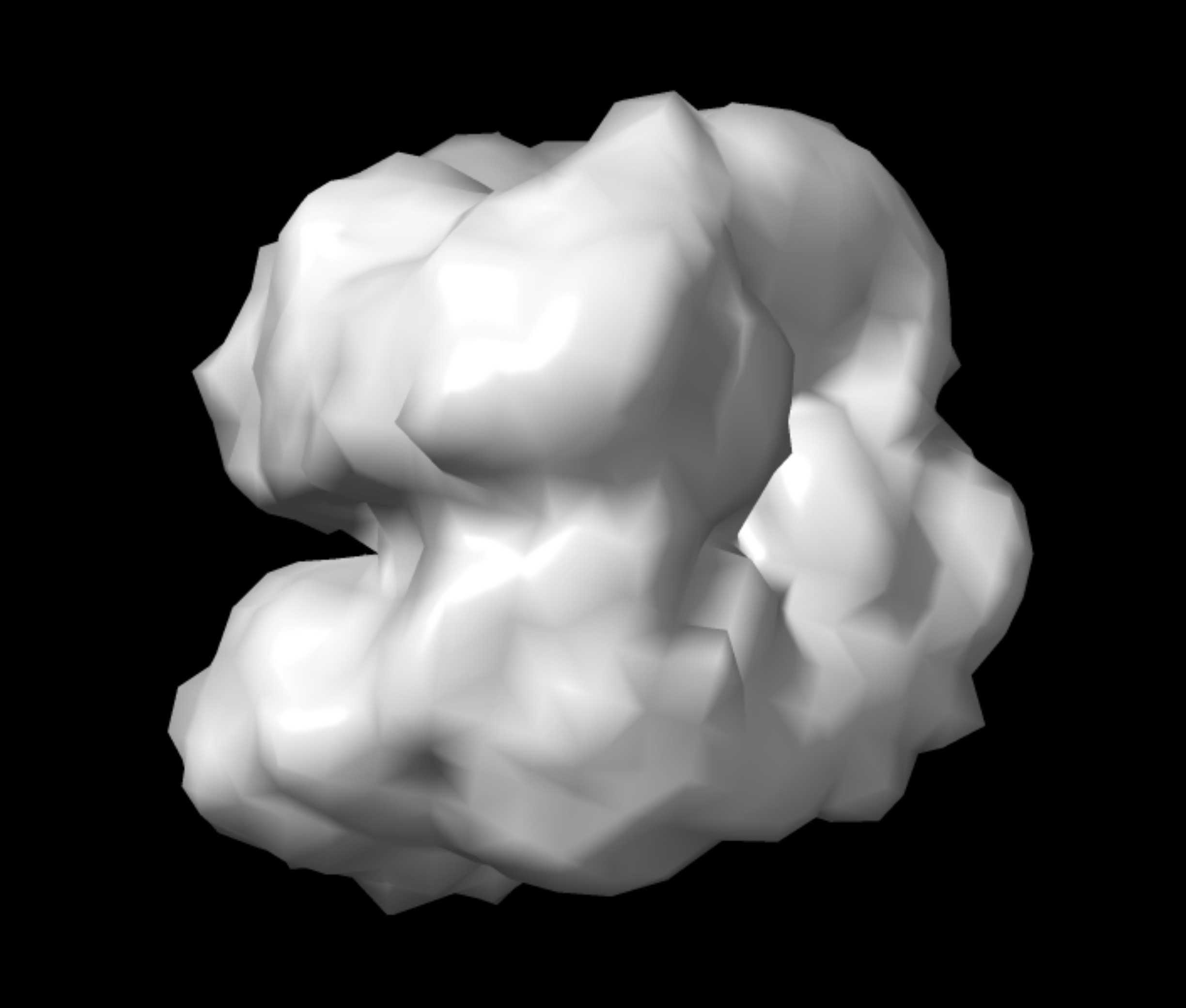}
 \includegraphics[width=0.32\textwidth]{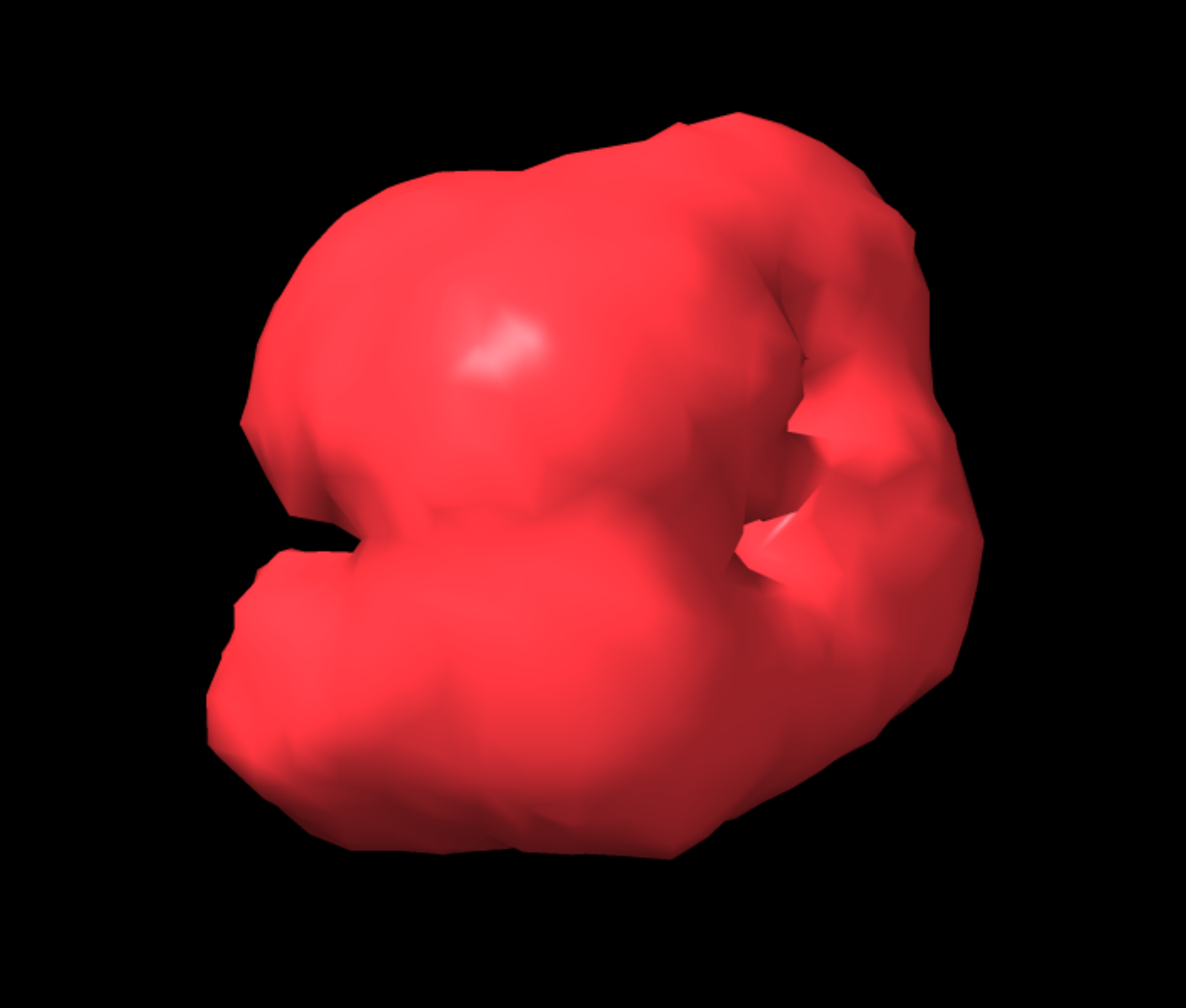}
 \includegraphics[width=0.32\textwidth]{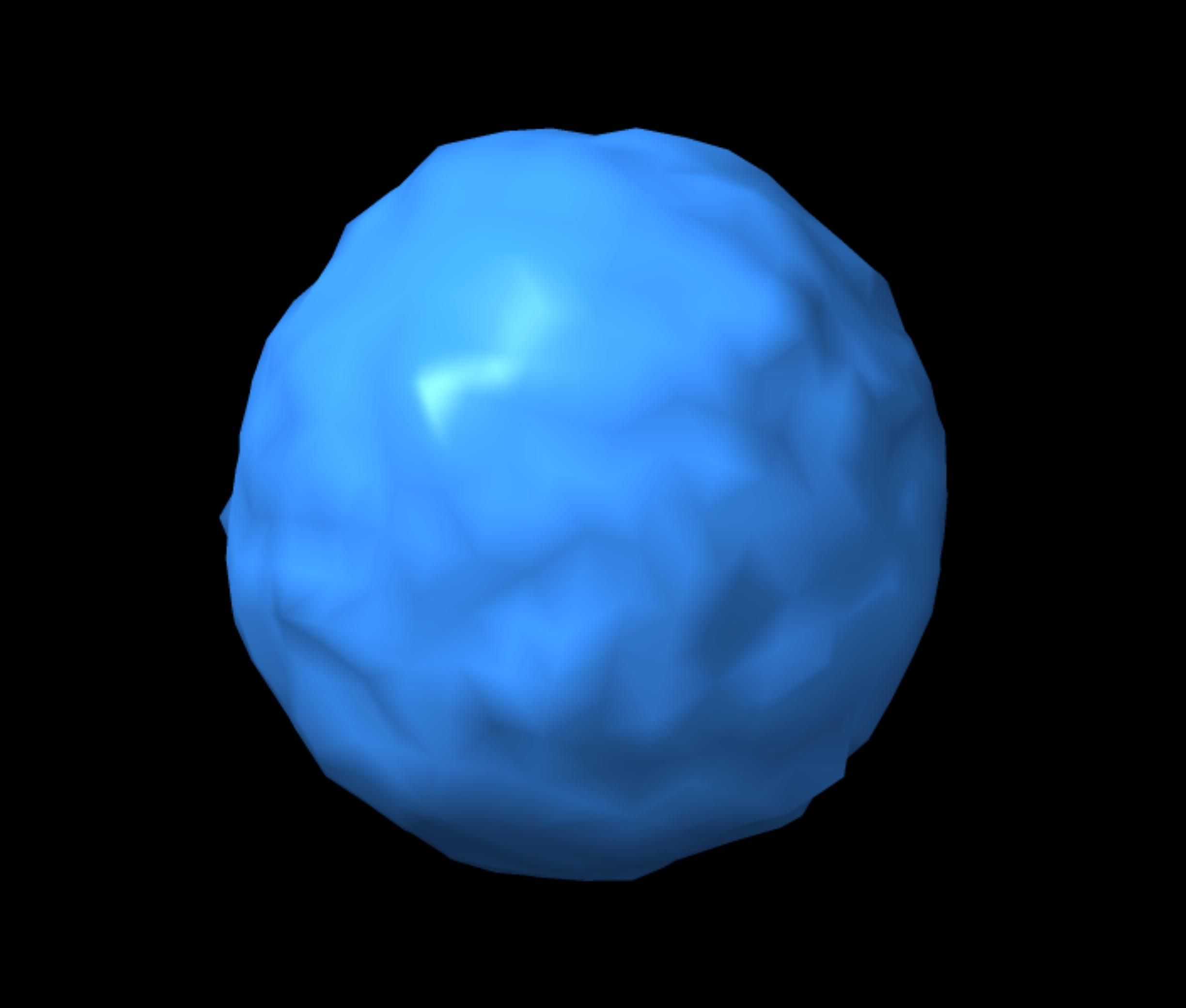}
\end{subfigure}
\begin{subfigure}[c]{0.33\textwidth}
 \includegraphics[width=\textwidth]{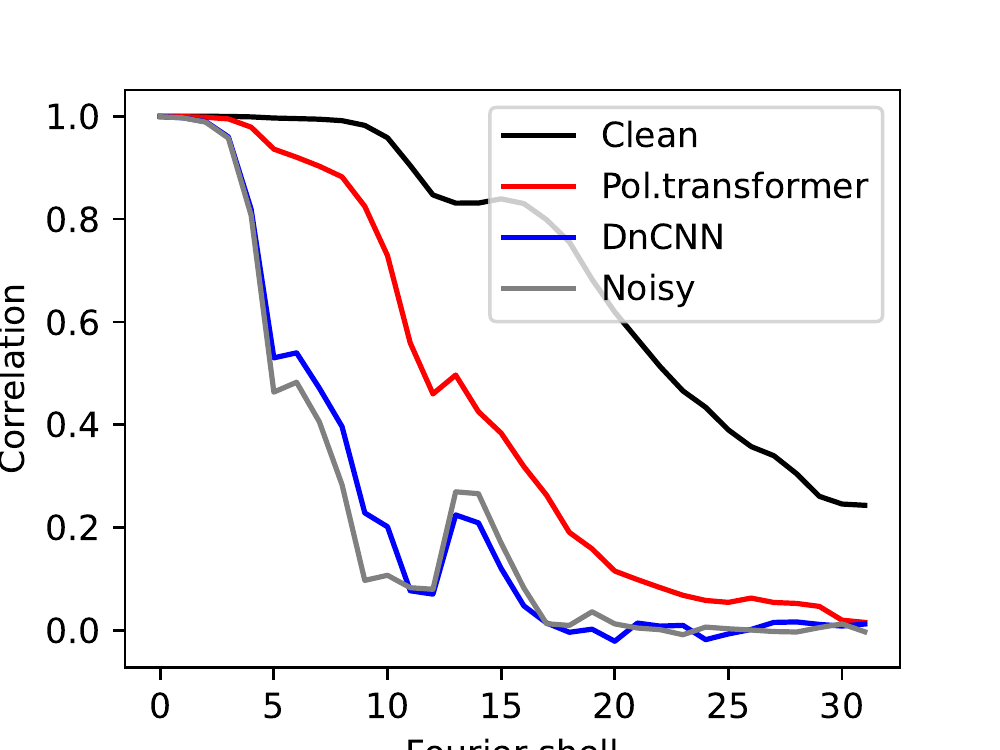}
\end{subfigure}
\end{center}
    \caption{Reconstruction quality like in Figure~\ref{fig:reconstrFSC}, but for molecules with PDB IDs 1T5C (top) and 1V8C (bottom).}
 \label{fig:reconstrFSCproblematics}
\end{figure*}

We note that for 1V8C, the polar transformer yields a more accurate reconstruction, but that 1T5C, the performance is similar to that of DnCNN.
There are a few potential reasons for this.
First, we are working with only 2 048 projections, which means that certain viewing directions may not be very well represented.
Second, the clustering algorithm may result in sets where images close to the center of the cluster have a viewing direction close to that of the others, but images close to the edge of the cluster have higher error.
As a result, those images will contribute to a lower-quality reconstruction.
Third, the clustering algorithm only operates on a hemisphere (combining antipodal viewing directions together), which creates a bias for viewing directons close to the equator.

We also note that these reconstructions are not truly ab initio, since prior knowledge of the viewing directions were used to cluster the images.
That being said, it is an approximation of what would be achieved using standard 2D classification methods, such as bispectrum-based approaches~\citep{Zhao2014Rotationally}.
It therefore provides an indication of the performance of 3D reconstruction from the denoised images.
A full reconstruction pipeline incorporating our method still requires improving several aspects, which will need to be addressed in future work.

\end{document}